\newcommand{\gr}[1]{\boldsymbol{#1}}
\renewcommand{\th}{\vartheta}
\newcommand{\R}{\mathbb{R}}
\newcommand{\de}[2]{\frac{\partial #1}{\partial #2}}
\newcommand{\weak}{\rightharpoonup}
\newtheorem{prop}{Proposition}
\newtheorem{defn}{Definition} 
\newtheorem{thm}{Theorem}
\newtheorem{rem}{Remark}
\renewcommand{\div}{\operatorname{div}}
\newcommand{\eps}{\varepsilon}
\renewcommand{\th}{\vartheta}
\newcommand{\vett}[1]{\boldsymbol{#1}}
\begin{document}
	\title{Thermal convection in a higher velocity gradient\\and higher temperature gradient fluid}
	\author{G. Giantesio$^{(1)}$\thanks{\texttt{giulia.giantesio@unicatt.it}}, A. Girelli$^{(1)}$\thanks{\texttt{alberto.girelli@unicatt.it}}, C. Lonati$^{(2)}$\thanks{\texttt{chiara.lonati01@universitadipavia.it}},\\
		A. Marzocchi$^{(1)}$\thanks{
\texttt{alfredo.marzocchi@unicatt.it}}, A. Musesti$^{(1)}$\thanks{
\texttt{alessandro.musesti@unicatt.it}}, B. Straughan$^{(3)}$\thanks{\texttt{brian.straughan@durham.ac.uk}}\\
		\bigskip\\
		\normalsize $^{(1)}$ Dipartimento di Matematica e Fisica ``N. Tartaglia", 	\normalsize Università Cattolica del Sacro Cuore,\\
		\normalsize via della Garzetta 48, I-25133 Brescia, Italy\\
		\normalsize$^{(2)}$ Dipartimento di Matematica ``F. Casorati'',
		\normalsize Università di Pavia,\\
		\normalsize via Ferrata 5, I-27100 Pavia, Italy.\\
		\normalsize$^{(3)}$ Department of Mathematical Sciences,
		\normalsize	University of Durham,\\ 
		\normalsize	Stockton Road, Durham DH1 3LE, United Kingdom. \\
		}
	\date{}
	\maketitle
	\begin{abstract}
		We analyse a model for thermal convection in a class of generalized Navier-Stokes equations containing fourth order spatial derivatives
		of the velocity and of the temperature. The work generalises the isothermal model of A. Musesti.
		We derive critical Rayleigh and wavenumbers for the onset of convective fluid motion paying careful attention to the variation
		of coefficients of the highest derivatives.
		In addition to linear instability theory we include an analysis of fully nonlinear stability theory.
		The theory analysed possesses a bi-Laplacian term for the velocity field and also for the temperature field.
		It was pointed out by
		E. Fried and M. Gurtin that higher order terms represent micro-length effects and these phenomena are very important in flows
		in microfluidic situations. 
		We introduce temperature into the theory via a Boussinesq approximation where the density of the body force term is allowed
		to depend upon temperature to account for buoyancy effects which arise due to expansion of the fluid when this is heated.
		We analyse a meaningful set of boundary
		conditions which are introduced by Fried and Gurtin as conditions of strong adherence, and these are
		crucial to understand the effect of the higher order derivatives upon convective motion in a microfluidic 
		scenario where micro-length effects are paramount.
		The basic steady state is the one of zero velocity, but in contrast to the classical theory the temperature field is nonlinear in the 
		vertical coordinate. This requires care especially dealing with nonlinear theory and also leads to some novel effects.
	\end{abstract}
 \noindent  \textbf{Keywords} Generalized Navier-Stokes, fourth order derivatives, thermal convection, nonlinear stability.\\
\textbf{MSC codes} 76D03, 76D05, 76E06, 76E30, 76M22, 76M30.
	\section{Introduction}
\label{S:Intro}
	There is growing interest in the fluid dynamics literature in theories which are generalizations of the 
	Navier-Stokes equations, cf. \cite{Bresch:2007}, \cite{Bresch:2022}, \cite{Damazio:2016}, \cite{Degiovanni:2020},
	\cite{FriedGurtin:2006}, \cite{Giusteri:2011}, \cite{GoudonVasseur:2016}, \cite{Guillen:2007}, \cite{JabourBondi:2022},
	\cite{KalantarovTiti:2018}, \cite{Musesti:2009}, \cite{SlomkaDunkel:2015}, 
\cite{Straughan:2021AMO,Straughan:2021AMO2,Straughan:2023AMO,Straughan:2023EPJP},
	\cite{Wang:2017}, \cite{Zvyagin:2013}. 
	Much of this interest is driven by applications in the microfluidics industry where flows are in very small tubes and channels,
	see e.g. \cite{Christov:2018jfm}, \cite{WangChristov:2019prsa}, \cite{Wang:2022mrc}.
	\cite{FriedGurtin:2006} argue that when flow dimensions are small then length scale effects become dominant and the stress
	tensor should depend not only on the velocity gradient, but also on higher gradients of velocity. This led 
	\cite{FriedGurtin:2006} to produce a generalized Navier-Stokes theory where the momentum equation contains in addition
	to the Laplacian of the velocity field, a term with the bi-Laplacian of the velocity. The theory of \cite{FriedGurtin:2006}
	was completed by \cite{Musesti:2009} who gave the full form of constitutive theory for the stress tensor.
	
	Other theories for incompressible fluids which involve a bi-Laplacian are reviewed by \cite{Straughan:2023EPJP} who
	discusses the couple stress theory of \cite{Stokes:1966} and the dipolar fluid theory of \cite{BleusteinGreen:1967}. The latter
	theory is believed appropriate to the case where the fluid contains long molecules and \cite{BleusteinGreen:1967} expand the velocity
	field 
	$v_i({\bf x},t)$
	in a Taylor series 
	\begin{equation*}
		v_i({\bf y},t)=v_i({\bf x},t)+v_{i,j}({\bf x},t)(y_j-x_j)+\dots\,,
	\end{equation*}
	to explain the inclusion of
	$v_{i,j}$
	and
	$v_{i,jk}$
	in the constitutive theory for a dipolar fluid. \cite{Straughan:2023EPJP} develops a theory for thermal convection in the
	Fried-Gurtin-Musesti framework where the momentum equation contains the bi-Laplacian of
	$v_i$.
	
	Within the field of Solid Mechanics higher gradient theories are well established, see e.g. \cite{Fabrizio:2022mrc},
	\cite{Fabrizio:2022jts}, \cite{Iesan:2023}, and the many references therein. These authors give convincing arguments to include not only higher
	derivatives of displacement or velocity, but when temperature effects are present, they argue for the inclusion of higher
	derivatives of temperature in the constitutive theory. This is closely related to the phenomenon of microtemperatures
	which is prevalent in the Continuum Mechanics literature, see e.g. \cite{Aouadi:2020}, \cite{Bazarra:2021}, and the many
	references therein. In this case one surrounds a point 
	${\bf x}$
	by a microelement of diameter
	$d$
	and one writes the temperature
	$T({\bf x},t)$
	in the form, see e.g. \cite{Bazarra:2021},
	\begin{equation*}
		T({\bf y},t)=T({\bf x},t)+T_j({\bf x},t)(y_j-x_j)+O(d^2)\,,
	\end{equation*}
	where
	$T_j$
	are quantities known as microtemperatures which represent the variation of the temperature inside the microelement.
	
	In this article we specialize this concept and regard the expansion of
	$T$
	as a Taylor series to find
	\begin{equation*}
		T({\bf y},t)=T({\bf x},t)+T_{,j}({\bf x},t)(y_j-x_j)+\dots\,.
	\end{equation*}
	We argue that in microfluidic situations not only are higher gradients of velocity important, but also higher gradients
	of temperature should be taken into account. We essentially employ a Fried-Gurtin-Musesti theory but we allow the heat
	flux,
	$q_i$,
	to depend on
	$T_{,m},T_{,mn}$
	and
	$T_{,mnp}$.
	In order to have a heat flux linear in these variables in an isotropic fluid we then have
	\begin{equation}\label{E:qEq}
		q_i=-k_1T_{,i}+k_2\Delta T_{,i}\,,
	\end{equation}
	where 
	$\Delta$
	is the three-dimensional Laplacian and 
	$k_1,k_2$
	are positive constants.
	Such expressions are already employed in the Solid Mechanics case, see \cite{Fabrizio:2022jts}, although there the coldness function $1/T$ is utilized, and see \cite{Iesan:2023}. In an independent approach \cite{Christov:2007} has argued that for heat conduction micro
effects will necessitate an equation like \eqref{E:qEq} for a complete description of the temperature field, 
especially due to relations at the microscopic level. 
This argument has been substantiated using homogenization theory by \cite{NikaMuntean:2022} and \cite{Nika:2023}. 
	
	We study thermal convection in a Fried-Gurtin-Musesti incompressible fluid, allowing also for higher 
	temperature gradients as in \eqref{E:qEq} and analyzing in detail the setting  where a horizontal layer of liquid is heated from below. The main results are existence of a solution and the derivation of precise conditions, in linear and nonlinear stability, under which convective motion is possible. The results differ
	from the classical theory of thermal convection in a Navier-Stokes fluid not only due to the bi-Laplacian term involving the 
	velocity field, but also because the basic steady state solution is nonlinear in the vertical coordinate,
	$z$,
	as opposed to the classical situation where the steady state is linear in
	$z$.
	
	Stability studies in the classical theory of fluid flow and thermal convection are still continuing to be highly relevant in modern fluid
	dynamic research, see e.g.
	\cite{Bissell:2016},
	\cite{Capone:2022PV},
	\cite{Eltayeb:2017},
	\cite{Hughes:2021},
	\cite{Samanta:2020},
	\cite{WangChen:2022pf,WangChen:2022}. Due to the application of this work in microfluidic situations, and in cases where the molecular structure of the fluid
	contains long molecules, or where additives affect the fluid behaviour such as in solar pond technology in the renewable
	energy sector, we believe this work will be very useful.\\
	The plan of the paper is the following. In Section 2 we present the fundamental equations of the problem. In Section 3 we carry out an existence and uniqueness result. We then specify the problem to the mentioned setting in Section 4, and in Section 5 and 6 we develop a careful study for linear and nonlinear stability. Numerical results are presented in Section 7.
	
	\section{Generalized Navier-Stokes model for thermal convection} 
	\label{S:Model}
	
	If we employ 
	a Boussinesq approximation, see \cite{Barletta:2022}, where the density is a constant,
	$\rho_0$,
	apart from in the buoyancy term in the body force, then the momentum equation arising from the Fried-Gurtin-Musesti theory
	has form
	\begin{equation}\label{E:Mom}
		v_{i,t}+v_jv_{i,j}=-\frac{1}{\rho_0}p_{,i}+\nu\Delta v_i-{\hat\xi}\Delta^2 v_i-\alpha g_iT\,,
	\end{equation}
	where
	$p({\bf x},t)$
	is the pressure,
	$g_i$
	is the gravity vector, 
	$\nu$
	is the kinematic viscosity,
	${\hat\xi}$
	is a hyperviscosity coefficient,
	$\Delta$
	is the Laplacian in 3 dimensions, and
	$\alpha$
	is the thermal expansion coefficient of the fluid which arises through the density representation in the body force term, namely
	\begin{equation*}
		\rho=\rho_0(1-\alpha(T-T_0)).
	\end{equation*}
	In \eqref{E:Mom} and throughout
	we employ standard indicial notation together with the Einstein summation convention. For example,
	the divergence of the velocity field is  
$$v_{i,i}\equiv\sum_{i=1}^3v_{i,i}=
		\frac{\partial v_1}{\partial x_1}+\frac{\partial v_2}{\partial x_2}+\frac{\partial v_3}{\partial x_3}=\frac{\partial u}{\partial x}+\frac{\partial v}{\partial y}+\frac{\partial w}{\partial z},
	$$
	where ${\bf v}=(v_1,v_2,v_3)\equiv(u,v,w)$ and ${\bf x}=(x_1,x_2,x_3)\equiv(x,y,z)$. A further example is
	\begin{equation*}
		v_iT_{,i}\equiv\sum^3_{i=1}v_iT_{,i}=
		u\frac{\partial T}{\partial x}
		+v\frac{\partial T}{\partial y}
		+w\frac{\partial T}{\partial z},
	\end{equation*}
	for a function 
	$T$
	depending upon 
	${\bf x}$, $t$.
	
	Since the fluid is
	incompressible, the velocity field satisfies 
	\begin{equation}\label{E:Cty}
		v_{i,i}=0.
	\end{equation}
	The equation of balance of energy employing a Boussinesq approximation \cite{Barletta:2022}, together with equation
	\eqref{E:qEq} becomes
	\begin{equation}\label{E:Teq}
		T_{,t}+v_iT_{,i}=\kappa_1\Delta T-\kappa_2 \Delta^2T.
	\end{equation}
	The coefficients 
	$\kappa_1$
	and
	$\kappa_2$
	represent
	$k_1$
	and
	$k_2$
	divided by
	$\rho_0c_p$
	where
	$c_p$
	is the specific heat at constant pressure of the fluid.
 
	Thus, our model for non-isothermal fluid movement consists of equations \eqref{E:Mom}, \eqref{E:Cty} and \eqref{E:Teq}.

	\section{Existence theory} 
	\label{S:Existence}	
	For $d>0,\lambda=\sqrt{\kappa_1/\kappa_2}\in\R$, we define the function  
	$$h_{d,\lambda}(x):=
	\begin{cases}
		-\dfrac{\lambda x\cosh(\lambda d)-\sinh(\lambda x)}{\lambda d\cosh(\lambda d)-\sinh(\lambda d)}& \lambda\neq0\\
		\noalign{\smallskip}
		\dfrac{x^3-3d^2x}{2d^3}&\lambda=0.
	\end{cases}
	$$
	The function $h_{d,\lambda}$ is readily seen to be $C^\infty$, bounded on $[-d,d]$ for every $\lambda\in\R$, such that $h_{d,\lambda}'(\pm d)=0$ and $\lambda^2h_{d,\lambda}''-h_{d,\lambda}''''=0$ for every $d>0,\lambda\in\R$. 
	
	It is also immediate to verify that
	\begin{equation}
 \label{tbar}\bar T(z)=\frac{T_L-T_U}{2}h_{d/2,\lambda}\left(z-\frac{d}{2}\right)+\frac{T_L+T_U}{2}
 \end{equation}
	verifies
	$$
	\kappa_1\frac{d^2{\bar T}}{dz^2}-\kappa_2\frac{d^4{\bar T}}{dz^4}=0,\quad \bar T(0)=T_L,\ \bar T(d)=T_U,\ \bar T'(0)=\bar T'(d)=0$$
	with $\kappa_1\geq 0,\kappa_2>0$, $T_L>T_U$. 
	The function $\bar T$ is easily seen to be of class $C^\infty$ on $[0,d]$ and uniformly bounded on $[T_U,T_L]$.
	
	Set $\th=T-\bar T$ and $\beta=(T_L-T_U)/d$. The equations \eqref{E:Mom}, \eqref{E:Cty} and \eqref{E:Teq} become:
	\begin{equation}
		\begin{aligned}
			\de{{\bf v}}{t}+({\bf v}\cdot\nabla){\bf v}=&-\frac{1}{\rho_0}\nabla p+\nu\Delta {\bf v}-\hat{\xi}\Delta^2{\bf v}-\alpha\vett g(\th+\bar T)\\
			\de{\th}{t}+{\bf v}\cdot\nabla\th= & \frac{\beta d}{2}w\dfrac{\lambda \cosh(\lambda \frac{d}{2})-\lambda\cosh(\lambda (z-\frac{d}{2}))}{\lambda \frac{d}{2}\cosh(\lambda \frac{d}{2})-\sinh(\lambda \frac{d}{2})}+\kappa_1\Delta\th-\kappa_2\Delta^2\th\\
   &+\kappa_1
   \frac{\beta d}{2}\dfrac{\lambda^2\sinh(\lambda (z-\frac{d}{2}))}{\lambda \frac{d}{2}\cosh(\lambda \frac{d}{2})-\sinh(\lambda \frac{d}{2})}-\kappa_2\frac{\beta d}{2}\dfrac{\lambda^4\sinh(\lambda (z-\frac{d}{2}))}{\lambda \frac{d}{2}\cosh(\lambda \frac{d}{2})-\sinh(\lambda \frac{d}{2})}.
		\end{aligned}
		\label{eq:sist}
	\end{equation}
Let $D$ be a domain in $\R^2$ and $\Omega:=D\times]0,d[$. System \eqref{eq:sist} is going to hold in $\Omega$. As for the boundary conditions, we will first suppose that
	\begin{equation}
		\label{ey conditions, q:condcont}
		\begin{aligned}
			&{\bf v}(x,y,0)={\bf v}(x,y,d)=\de{{\bf v}}{{\bf n}}(x,y,0)=\de{{\bf v}}{{\bf n}}(x,y,d)={\bf 0},\\
			&\th(x,y,0)=\th(x,y,d)=\de{\th}{{\bf n}}(x,y,0)=\de{\th}{{\bf n}}(x,y,d)=0.	
		\end{aligned}
	\end{equation}
On $\partial D$, which will be the boundary $\mathcal H^1$-a.e. regular of a bounded domain $D$ tiling the plane, we will suppose homogeneous boundary conditions for ${\bf v}$ and periodic boundary conditions for $\th$, for a.e. $z\in ]0,d[$:
\begin{equation}\label{eq:ondeD}
    {\bf v}_{|\partial D}=\de{{\bf {\bf v}}}{{\bf n}}_{|\partial D}={\bf 0},\qquad \th\hbox{ and }
    \de{\th}{{\bf n}}\hbox{ periodic on $\partial D$ for a.e. $z\in ]0,d[$}.
\end{equation}

We remark that more general boundary conditions may be considered: the following existence and uniqueness results hold, indeed, for every solution $({\bf v},\th )$ such that their boundary integrals and the boundary integrals of their normal derivatives vanish. We restrict ourselves here to a paradigmatic case for the sake of simplicity.
	Equations \eqref{eq:sist} fit into the following class:
	\begin{equation}
		\begin{aligned}
			&\de{{\bf v}}{t}+({\bf v}\cdot\nabla){\bf v}=-\frac{1}{\rho_0}\nabla p+\nu\Delta {\bf v}-\hat{\xi}\Delta^2 {\bf v}+{\bf L}_1(z)\th+{\bf G}_1(z)\\
			&\de{\th}{t}+{\bf v}\cdot\nabla\th=\kappa_1\Delta\th-\kappa_2\Delta^2\th+{\bf L}_2(z)\cdot {\bf v}+G_2(z)
		\end{aligned}
		\label{eq:semiastr}
	\end{equation}
	where ${\bf L}_1$ and ${\bf L}_2$ are uniformly bounded linear operators and ${\bf G}_1$ and $G_2$ are bounded regular functions of $z$. In particular, there exists $C_d\geq0$ such that
	$$||{\bf G}_1||_{L^\infty(0,d)}\leq C_d,\quad ||G_2||_{L^\infty(0,d)}\leq C_d$$
	where the constant depends only on the thickness $d$. 
 
	We will treat existence theory for a system like \eqref{eq:semiastr} with boundary conditions \eqref{ey conditions, q:condcont}--\eqref{eq:ondeD}. Let us first introduce the appropriate functional setting.
	
	\subsection{Functional spaces, general estimates and functional setting}
	
	Let $\Omega$ be an open domain in $\R^3$ with regular boundary. We introduce the following functional spaces:
	$$
	\begin{aligned}
		C_{\rm div}(\Omega)&=\{{\bf v}\in C_c^\infty(\Omega):\div {\bf v}=0\}\\
		J^2(\Omega)&=\hbox{the closure of $C_{\rm div}(\Omega)$ in $L^2(\Omega)$}\\
		G^2(\Omega)&=\{{\bf f}\in L^2(\Omega):\int_\Omega {\bf f}\cdot {\bf v}\,dx=0 \hbox{ for all } {\bf v}\in C_{\rm div}(\Omega)\}\\
		J^{m,2}(\Omega)&=\hbox{the closure of $C_{\rm div}(\Omega)$ in $H^m_0(\Omega)$}\\
        H^{m,2}_{per}(D)&=\hbox{the subspace of periodic functions in $H^m(D)$}\\
        G^{-m,2}(\Omega)&=\{{\bf f}\in H^{-m}(\Omega):\langle {\bf f},{\bf v}\rangle=0 \hbox{ for all } {\bf v}\in C_{\rm div}(\Omega)\}\\
        H^m_{0,per}(\Omega)&=\{\th\in H^m(\Omega)\hbox{ vanishing at $0,d$ and periodic on $\partial D$ for every $z\in ]0,d[$}\}.
	\end{aligned}
	$$
   For an introduction to spaces of periodic $H^m$ functions see \cite{temam1997infinite}, p.~50. Recall the relation
	$$L^2(\Omega)=J^2(\Omega)\oplus G^2(\Omega)$$
	which is, for regular vector fields, the usual decomposition of ${\bf v}$ as a sum of a divergence-free vector field and a vector orthogonal to the space of solenoidal fields. In view of de Rham's theorem \cite{temam2001navier}, we have 
	$$
	\begin{aligned}
		G^2(\Omega)&=\{{\bf v}\in L^2(\Omega):{\bf v}=\nabla p\hbox{ for some }p\in H^1_{\rm loc}(\Omega)\}\\
		G^{-m,2}(\Omega)&=\{{\bf v}\in H^{-m}(\Omega):{\bf v}=\nabla p\hbox{ for some }p\in H^{-m+1}_{\rm loc}(\Omega)\}.
	\end{aligned}
	$$
	
	Let now ${\bf v}\in H^2_0(\Omega)$. In \cite{brezis2011functional} it is proved that there exist $C>0$ and $\alpha,\beta\in]0,1[$ such that
	\begin{equation}
		\label{eq:stimeBrez}
		\begin{aligned}
			||{\bf v}||_6&\leq C(||{\bf v}||_2+||{\bf v}||_2^\alpha||\Delta {\bf v}||_2^{1-\alpha})\\
			||\nabla {\bf v}||_3&\leq C(||{\bf v}||_2+||{\bf v}||_2^{\beta}||\Delta {\bf v}||_2^{1-\beta}).
		\end{aligned}
	\end{equation}
	Moreover, being ${\bf v}$ and $ \nabla {\bf v}$ zero at the boundary, it follows
	$$\int_\Omega|\nabla\nabla {\bf v}|^2\,dx=\int_\Omega v_{,ij}v_{,ij}\,dx=\int_\Omega v_{,ii}v_{,jj}\,dx=\int_\Omega|\Delta {\bf v}|^2\,dx$$
	and clearly
	$$\int_{\Omega}|\nabla {\bf v}|^2\,dx=-\int_\Omega {\bf v}\cdot \Delta {\bf v}\,dx\leq ||{\bf v}||_2||\Delta {\bf v}||_2.$$
	This implies that $(||{\bf v}||_2^2+||\Delta {\bf v}||_2^2)^{1/2}$ is an equivalent norm on $H^2_0(\Omega)$. If $\Omega$ is bounded, by a repeated application of Poincaré inequality, the same holds for $||\Delta {\bf v}||^2$ only. The same facts hold for scalar or vector fields with components in $H^2_0(\Omega)$ or $H^2_{0,per}$.
	
	\medskip
	
	We set ${\bf U}=({\bf v},\th)$ and rewrite the equations in compact form
	
	\begin{equation}
		\de{{\bf U}}{t}+N({\bf U})=L{\bf U}+{\bf G}
		\label{eq:eqastr}
	\end{equation}
	where
	$$N({\bf U})=
	\begin{bmatrix}
		{\bf v}\cdot \nabla &0\\
		0&{\bf v}\cdot\nabla 
	\end{bmatrix}
	\begin{bmatrix}
		{\bf v}\\
		\th
	\end{bmatrix},
	$$
	$$L{\bf U}=\begin{bmatrix}
		\nu\Delta-\xi^2\Delta^2&{\bf L}_1(z)\\
		{\bf L}_2(z)\cdot&\kappa_1\Delta-\kappa_2\Delta^2
	\end{bmatrix}\begin{bmatrix}
		{\bf v}\\
		\th
	\end{bmatrix}
,
	\qquad
	{\bf G}=\begin{bmatrix}
		-\frac{1}{\rho_0}\nabla p +{\bf G}_1(z)\\
		G_2(z)
	\end{bmatrix}
	$$
	and finally with $\div {\bf v}=0$, where ${\bf U}$ lies in the space $X=J^{2,2}(\Omega)\times H^2_{0,per}$, with the norm 
	$$||{\bf U}||_X^2=||\nabla\nabla {\bf v}||_2^2+||\nabla\nabla\th||_2^2=||\Delta {\bf v}||_2^2+||\Delta\th||_2^2$$
	which is in this case equivalent to the natural one. 
	
	\subsection{Estimates on the nonlinear terms}
	
	We first need some inequalities involving the nonlinear term $N({\bf U})$ in \eqref{eq:eqastr}.
	
	\begin{prop}
		Let ${\bf U}_1,{\bf U}_2\in H^2_0(\Omega)\times H^2_{0,per}$ with $\div {\bf v}_1=\div {\bf v}_2$ in $\Omega$ and $\th_1=\th_2$ on $\partial\Omega$. Then there exist $\gamma>0$ and, for any $\eps>0$, a number $C_\eps>0$ independent of ${\bf U}_1,{\bf U}_2$ such that
		\begin{equation}
			\langle N({\bf U}_1)-N({\bf U}_2),{\bf U}_1-{\bf U}_2\rangle\geq -\eps||\Delta ({\bf U}_1-{\bf U}_2)||_2^2-C_\eps(1+||{\bf U}_1||_2+||{\bf U}_2||_2)^\gamma||{\bf U}_1-{\bf U}_2||_2^2
			\label{eq:stimeNU}
		\end{equation}
		where $||{\bf U}||_2^2$ stands for $||{\bf v}||_2^2+||\th||_2^2$.
	\end{prop}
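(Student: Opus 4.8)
The plan is to expand $N(\mathbf{U}_1)-N(\mathbf{U}_2)$ using the bilinearity of the convective term, pair it against $\mathbf{U}_1-\mathbf{U}_2$, integrate by parts to exploit the boundary conditions, and estimate the surviving trilinear terms by Hölder's inequality, the interpolation inequalities \eqref{eq:stimeBrez}, and Young's inequality with a small parameter. Write $\mathbf{w}=\mathbf{v}_1-\mathbf{v}_2$, $\sigma=\vartheta_1-\vartheta_2$, $\mathbf{W}=\mathbf{U}_1-\mathbf{U}_2=(\mathbf{w},\sigma)$; by hypothesis $\div\mathbf{w}=0$, $\mathbf{w}$ and its normal derivative vanish on $\partial\Omega$, and $\sigma$ vanishes on $\partial\Omega$. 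Bilinearity gives $(\mathbf{v}_1\cdot\nabla)\mathbf{v}_1-(\mathbf{v}_2\cdot\nabla)\mathbf{v}_2=(\mathbf{v}_1\cdot\nabla)\mathbf{w}+(\mathbf{w}\cdot\nabla)\mathbf{v}_2$ and $(\mathbf{v}_1\cdot\nabla)\vartheta_1-(\mathbf{v}_2\cdot\nabla)\vartheta_2=(\mathbf{v}_1\cdot\nabla)\sigma+(\mathbf{w}\cdot\nabla)\vartheta_2$, so that
\begin{equation*}
\langle N(\mathbf{U}_1)-N(\mathbf{U}_2),\mathbf{W}\rangle=\int_\Omega(\mathbf{v}_1\cdot\nabla)\mathbf{w}\cdot\mathbf{w}\,dx+\int_\Omega(\mathbf{w}\cdot\nabla)\mathbf{v}_2\cdot\mathbf{w}\,dx+\int_\Omega(\mathbf{v}_1\cdot\nabla)\sigma\,\sigma\,dx+\int_\Omega(\mathbf{w}\cdot\nabla)\vartheta_2\,\sigma\,dx.
\end{equation*}

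Next, the two ``diagonal'' integrals are reduced by integration by parts: since $\mathbf{w}$ and $\sigma$ vanish on $\partial\Omega$, one finds $\int_\Omega(\mathbf{v}_1\cdot\nabla)\mathbf{w}\cdot\mathbf{w}=-\tfrac12\int_\Omega(\div\mathbf{v}_1)|\mathbf{w}|^2$ and $\int_\Omega(\mathbf{v}_1\cdot\nabla)\sigma\,\sigma=-\tfrac12\int_\Omega(\div\mathbf{v}_1)\sigma^2$. In the solenoidal case — which is the one used in the existence proof, where $\mathbf{v}_1,\mathbf{v}_2\in J^{2,2}(\Omega)$ — these vanish identically; in general they have the same structure as the cross-terms below (bound $\div\mathbf{v}_1$ in $L^3$ via \eqref{eq:stimeBrez} and $|\mathbf{w}|^2,\sigma^2$ in $L^{3/2}$) and are absorbed in the same way. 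There remain $A:=\int_\Omega(\mathbf{w}\cdot\nabla)\mathbf{v}_2\cdot\mathbf{w}\,dx$ and $B:=\int_\Omega(\mathbf{w}\cdot\nabla)\vartheta_2\,\sigma\,dx$; a further integration by parts, using $\div\mathbf{w}=0$ and the vanishing of $\mathbf{w},\sigma$ on $\partial\Omega$, transfers the derivative off $\mathbf{v}_2,\vartheta_2$, namely $A=-\int_\Omega(\mathbf{w}\cdot\nabla)\mathbf{w}\cdot\mathbf{v}_2\,dx$ and $B=-\int_\Omega(\mathbf{w}\cdot\nabla)\sigma\,\vartheta_2\,dx$.

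Finally, bounding $A$ and $B$ (and the diagonal terms) in modulus, I would apply Hölder's inequality with the exponents $(6,2,3)$, so that $|A|\le\|\mathbf{w}\|_6\|\nabla\mathbf{w}\|_2\|\mathbf{v}_2\|_3$ and $|B|\le\|\mathbf{w}\|_6\|\nabla\sigma\|_2\|\vartheta_2\|_3$, then use $\|\nabla\mathbf{w}\|_2\le\|\mathbf{w}\|_2^{1/2}\|\Delta\mathbf{w}\|_2^{1/2}$ and the analogous bound for $\|\nabla\sigma\|_2$, estimate $\|\mathbf{w}\|_6$ by \eqref{eq:stimeBrez}, and control $\|\mathbf{v}_2\|_3,\|\vartheta_2\|_3$ by interpolation (Gagliardo--Nirenberg together with \eqref{eq:stimeBrez} and the equivalence of $(\|\cdot\|_2^2+\|\Delta\cdot\|_2^2)^{1/2}$ with the norm of $H^2_0(\Omega)$ and of $H^2_{0,per}$). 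This produces products of powers of $\|\mathbf{w}\|_2,\|\sigma\|_2,\|\Delta\mathbf{w}\|_2,\|\Delta\sigma\|_2$ and of norms of $\mathbf{v}_2,\vartheta_2$; a last application of Young's inequality with a small parameter then sends every top-order factor carried by the difference $\mathbf{W}$ into $\eps\|\Delta\mathbf{W}\|_2^2$, while the remaining low-order factors collapse into $C_\eps(1+\|\mathbf{U}_1\|_2+\|\mathbf{U}_2\|_2)^\gamma\|\mathbf{W}\|_2^2$ for a suitable (small, explicit) $\gamma$, yielding the claimed lower bound. I expect the main obstacle to be exactly this bookkeeping: the Hölder triples and the Young exponents must be chosen so that only the prescribed power of $\|\Delta\mathbf{W}\|_2$ survives and the coefficient of $\|\mathbf{W}\|_2^2$ depends on $\mathbf{U}_1,\mathbf{U}_2$ solely through the quantity permitted in \eqref{eq:stimeNU}; the temperature cross-term $B$ needs slightly more attention, since $\vartheta$ carries no divergence-free structure, so the cancellations for it rely only on the boundary conditions (and on $\mathbf{w}$, rather than $\vartheta$, vanishing on all of $\partial\Omega$).
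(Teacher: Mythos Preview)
Your decomposition and the integrations by parts are fine, but the H\"older split you choose creates a genuine gap. When you estimate $|A|\le\|\mathbf{w}\|_6\|\nabla\mathbf{w}\|_2\|\mathbf{v}_2\|_3$ and $|B|\le\|\mathbf{w}\|_6\|\nabla\sigma\|_2\|\vartheta_2\|_3$, you place the $L^3$ norm on the \emph{undifferenced} quantities $\mathbf{v}_2,\vartheta_2$. To control $\|\mathbf{v}_2\|_3$ or $\|\vartheta_2\|_3$ by interpolation (as you propose) you necessarily bring in $\|\Delta\mathbf{v}_2\|_2$ or $\|\Delta\vartheta_2\|_2$; there is no way to bound $\|\cdot\|_3$ by $\|\cdot\|_2$ alone on a three-dimensional domain. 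But the inequality you are proving allows dependence on $\mathbf{U}_1,\mathbf{U}_2$ only through $\|\mathbf{U}_1\|_2+\|\mathbf{U}_2\|_2$, and the only second-order quantity you may absorb is $\|\Delta(\mathbf{U}_1-\mathbf{U}_2)\|_2$. A factor $\|\Delta\mathbf{v}_2\|_2$ cannot be hidden in either place, so the bookkeeping you anticipate cannot close. The same defect appears in your proposed treatment of the diagonal terms when $\div\mathbf{v}_1\ne 0$, since $\|\div\mathbf{v}_1\|_3$ again forces a Laplacian of $\mathbf{v}_1$.

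The fix is simply to permute the H\"older exponents so that the $L^2$ norm sits on the undifferenced factor and the stronger norms sit on the difference: after your integration by parts $A=-\int(\mathbf{w}\cdot\nabla)\mathbf{w}\cdot\mathbf{v}_2$, estimate instead $|A|\le\|\mathbf{v}_2\|_2\|\nabla\mathbf{w}\|_3\|\mathbf{w}\|_6$; likewise $|B|\le\|\vartheta_2\|_2\|\nabla\sigma\|_3\|\mathbf{w}\|_6$. For the first ``diagonal'' term there is no need to integrate by parts at all: directly $\bigl|\int(\mathbf{v}_1\cdot\nabla)\mathbf{w}\cdot\mathbf{w}\bigr|\le\|\mathbf{v}_1\|_2\|\nabla\mathbf{w}\|_3\|\mathbf{w}\|_6$, and similarly for the temperature. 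This is exactly the arrangement the paper uses; it yields the clean intermediate bound
\[
\langle N(\mathbf{U}_1)-N(\mathbf{U}_2),\mathbf{W}\rangle\ge -C(\|\mathbf{U}_1\|_2+\|\mathbf{U}_2\|_2)\,\|D\mathbf{W}\|_3\,\|\mathbf{W}\|_6,
\]
after which the estimates \eqref{eq:stimeBrez} apply only to $\mathbf{W}$ and Young's inequality finishes the job without any stray $\|\Delta\mathbf{U}_i\|_2$.
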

	
	\begin{proof} For the nonlinearity in the velocity, using H\"older inequality (see also \cite{Degiovanni:2020}) the following is not difficult to prove:
	\begin{equation}
		\begin{aligned}
			\int_\Omega((D{\bf v}_1){\bf v}_1-(D{\bf v}_2){\bf v}_2)&({\bf v}_1-{\bf v}_2)\,dx\geq\\
			&-C_1(||{\bf v}_1||_2+||{\bf v}_2||_2)||D({\bf v}_1-{\bf v}_2)||_3||{\bf v}_1-{\bf v}_2||_6
		\end{aligned}
		\label{eq:nonlfluid}
	\end{equation}
	for a suitable constant $C_1$ depending only on the dimension of space and	for every ${\bf v}_1,{\bf v}_2\in H^2_0(\Omega)$ with $\div {\bf v}_1=\div {\bf v}_2$.
	
	For the nonlinearity in the temperature we have instead
	$$
	\begin{aligned}
		&\int_\Omega({\bf v}_1\cdot\nabla\th_1-{\bf v}_2\cdot\nabla\th_2)(\th_1-\th_2)\,dx=\int_\Omega(\nabla(\th_1-\th_2)\cdot {\bf v}_1)(\th_1-\th_2)\,dx\\
		&+\int_\Omega\nabla\th_2\cdot({\bf v}_1-{\bf v}_2)(\th_1-\th_2)\,dx.
	\end{aligned}
	$$
	Now, if $\div {\bf v}_1=\div {\bf v}_2$ and since $\th_1,\th_2$ coincide on the boundary, the last integral is equal to
	$$\int_\Omega\div(\th_2({\bf v}_1-{\bf v}_2))(\th_1-\th_2)\,dx=-\int_\Omega\th_2({\bf v}_1-{\bf v}_2)\cdot\nabla(\th_1-\th_2).$$
	By H\"older inequality with exponents 2,3,6 we then get
	$$\begin{aligned}
		&\int_\Omega({\bf v}_1\cdot\nabla\th_1-{\bf v}_2\cdot\nabla\th_2)(\th_1-\th_2)\,dx\geq-C_2||{\bf v}_1||_2||\nabla(\th_1-\th_2)||_3||\th_1-\th_2||_6\\
		&-C_3||\th_2||_2||\nabla(\th_1-\th_2)||_3||{\bf v}_1-{\bf v}_2||_6
	\end{aligned}
	$$
	for suitable constants $C_2,C_3$. Switching the role of ${\bf v}_1,{\bf v}_2$ and $\th_1,\th_2$ and summing up, it is not difficult to see that there exists $C_4>0$ such that
	\begin{equation}
		\begin{aligned}
			&\int_\Omega({\bf v}_1\cdot\nabla\th_1-{\bf v}_2\cdot\nabla\th_2)(\th_1-\th_2)\,dx\geq\\
			&-C_4||\nabla(\th_1-\th_2)||_3((||{\bf v}_1||_2+||{\bf v}_2||_2)||\th_1-\th_2||_6+(||\th_1||_2+||\th_2||_2)||{\bf v}_1-{\bf v}_2||_6).
		\end{aligned}
		\label{eq:nonltemp}
	\end{equation}
	From \eqref{eq:nonlfluid} and \eqref{eq:nonltemp} it is easy to see that there exists $C_5>0$ such that
	$$
	\langle N({\bf U}_1)-N({\bf U}_2),{\bf U}_1-{\bf U}_2\rangle\geq -C_5(||{\bf U}_1||_2+||{\bf U}_2||_2)||D({\bf U}_1-{\bf U}_2)||_3||{\bf U}_1-{\bf U}_2||_6.
	$$
	By this and a repeated application of the estimates \eqref{eq:stimeBrez} the thesis follows.
	\end{proof}
	
	We remark that ${\bf U}_1,{\bf U}_2$ need not yet to be solutions of our problem.
	
	\subsection{The stationary case}
	
	We now want to recast our problem into the theory of variational inequalities to apply general existence theorems.
	
	We begin by setting, with $R>0$,
	$$K_R=\{{\bf z}\in X:||{\bf z}||_2\leq R\},\qquad \widehat K_R=\{{\bf z}\in H^2_0(\Omega)\times H^2_{0,per}(\Omega):||{\bf z}||_2\leq R\}.$$
	
	We then introduce $F:K_R\to H^{-2}(\Omega)\times H^{-2}(\Omega)$ defined as
	$$F({\bf U})=
	\begin{bmatrix}
		-\nu\Delta {\bf v}+\hat{\xi}\Delta^2 {\bf v}-{\bf L}_1(z)\th-{\bf G}_1(z)+\frac{1}{\rho_0} \nabla p+(\nabla {\bf v}){\bf v}\\
		-\kappa_1\Delta \th+\kappa_2\Delta^2\th-{\bf L}_2(z)\cdot {\bf v}-F_2(z)+{\bf v}\nabla\th
	\end{bmatrix}=N({\bf U})-L{\bf U}-{\bf G}.
	$$ 
	
	\begin{prop}\label{prop:maxmon}
		There exist $\delta_1,\delta_2>0$ independent of $R$ and $\omega_R>0$ depending on $R$ such that
		$$\langle F({\bf U}_1)-F({\bf U}_2),{\bf U}_1-{\bf U}_2\rangle\geq \delta_1||\Delta({\bf U}_1-{\bf U}_2)||_2^2+\delta_2||\nabla ({\bf U}_1-{\bf U}_2)||_2^2-\omega_R||{\bf U}_1-{\bf U}_2||_2^2,$$
  for all $ {\bf U}_1,{\bf U}_2\in K_R$ with $\operatorname{div}{\bf v}_1=\operatorname{div}{\bf v}_2$.
	\end{prop}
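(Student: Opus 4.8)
The plan is to use the decomposition $F=N-L-\mathbf{G}$ already displayed and to test the difference against $\mathbf{W}:=\mathbf{U}_1-\mathbf{U}_2=(\mathbf{w},\sigma)$. Since $\mathbf{G}$ does not depend on $\mathbf{U}$ it cancels from $F(\mathbf{U}_1)-F(\mathbf{U}_2)$, so
\[
\langle F(\mathbf{U}_1)-F(\mathbf{U}_2),\mathbf{W}\rangle=\langle N(\mathbf{U}_1)-N(\mathbf{U}_2),\mathbf{W}\rangle-\langle L\mathbf{W},\mathbf{W}\rangle .
\]
Because $\mathbf{U}_1,\mathbf{U}_2\in K_R\subset X$ have solenoidal velocity parts with $\div\mathbf{v}_1=\div\mathbf{v}_2$, the field $\mathbf{w}$ lies in $J^{2,2}(\Omega)$ and is therefore orthogonal to every gradient, so the pressure contribution $\rho_0^{-1}\nabla(p_1-p_2)$ drops out of the pairing. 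For the linear part I would integrate by parts, using the boundary conditions \eqref{eq:ondeD} and the strong-adherence conditions on the plates, which make $\mathbf{w}$, $\partial\mathbf{w}/\partial\mathbf{n}$, $\sigma$, $\partial\sigma/\partial\mathbf{n}$ vanish on the horizontal boundary and periodic on $\partial D$, so that all boundary integrals disappear; the diagonal operators $\nu\Delta-\hat\xi\Delta^2$ and $\kappa_1\Delta-\kappa_2\Delta^2$ then give
\[
-\langle L\mathbf{W},\mathbf{W}\rangle=\nu||\nabla\mathbf{w}||_2^2+\hat\xi||\Delta\mathbf{w}||_2^2+\kappa_1||\nabla\sigma||_2^2+\kappa_2||\Delta\sigma||_2^2-\int_\Omega\mathbf{L}_1(z)\sigma\cdot\mathbf{w}\,dx-\int_\Omega(\mathbf{L}_2(z)\cdot\mathbf{w})\,\sigma\,dx ,
\]
and the two coupling integrals, since $\mathbf{L}_1,\mathbf{L}_2$ are uniformly bounded, are dominated by $c\,||\mathbf{W}||_2^2$ with $c$ independent of $R$ (Cauchy--Schwarz and Young).

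For the nonlinear term I would quote the previous Proposition: for every $\eps>0$,
\[
\langle N(\mathbf{U}_1)-N(\mathbf{U}_2),\mathbf{W}\rangle\geq-\eps||\Delta\mathbf{W}||_2^2-C_\eps(1+||\mathbf{U}_1||_2+||\mathbf{U}_2||_2)^\gamma||\mathbf{W}||_2^2\geq-\eps||\Delta\mathbf{W}||_2^2-C_\eps(1+2R)^\gamma||\mathbf{W}||_2^2 ,
\]
using $||\mathbf{U}_i||_2\leq R$ on $K_R$. Adding the two bounds and dropping the nonnegative terms $\nu||\nabla\mathbf{w}||_2^2$ and $\kappa_1||\nabla\sigma||_2^2$ (recall $\kappa_1\geq0$ may vanish), I obtain
\[
\langle F(\mathbf{U}_1)-F(\mathbf{U}_2),\mathbf{W}\rangle\geq(\hat\xi-\eps)||\Delta\mathbf{w}||_2^2+(\kappa_2-\eps)||\Delta\sigma||_2^2-\big(c+C_\eps(1+2R)^\gamma\big)||\mathbf{W}||_2^2 .
\]
Choosing $\eps=\tfrac12\min(\hat\xi,\kappa_2)>0$ turns the first two terms into $\mu\,||\Delta\mathbf{W}||_2^2$ with $\mu=\tfrac12\min(\hat\xi,\kappa_2)$. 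Finally, since $\Omega$ is bounded, $(||\Delta\cdot||_2^2)^{1/2}$ is an equivalent norm on $H^2_0(\Omega)$ and $H^2_{0,per}$ (as recalled in Section~3.1), so $||\nabla\mathbf{W}||_2^2\leq C_P||\Delta\mathbf{W}||_2^2$; splitting $\mu||\Delta\mathbf{W}||_2^2=\tfrac\mu2||\Delta\mathbf{W}||_2^2+\tfrac\mu2||\Delta\mathbf{W}||_2^2\geq\tfrac\mu2||\Delta\mathbf{W}||_2^2+\tfrac{\mu}{2C_P}||\nabla\mathbf{W}||_2^2$ yields the claim with $\delta_1=\mu/2$ and $\delta_2=\mu/(2C_P)$, which depend only on $\hat\xi,\kappa_2$ and the domain (hence not on $R$), and $\omega_R=c+C_\eps(1+2R)^\gamma$.

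The only genuinely delicate step is the integration by parts on the bi-Laplacian terms: one must verify that no boundary contributions survive, and this is exactly where the Fried--Gurtin strong-adherence conditions enter — the identity $\int_\Omega v_{,ij}v_{,ij}\,dx=\int_\Omega|\Delta\mathbf{v}|^2\,dx$ recorded in Section~3.1 already packages this. Everything else — the estimate of the bounded couplings $\mathbf{L}_1,\mathbf{L}_2$, the absorption of the nonlinear term's $||\Delta\mathbf{W}||_2^2$ into the hyperviscosity contributions by taking $\eps$ small, and the Poincaré norm equivalence that produces the gradient term — is routine.
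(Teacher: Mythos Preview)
Your proof is correct and follows essentially the same route as the paper: cancel $\mathbf{G}$, kill the pressure by orthogonality to $J^{2,2}$, bound the $\mathbf{L}_1,\mathbf{L}_2$ couplings by Cauchy--Schwarz, extract the dissipative $\nabla$- and $\Delta$-norms from the diagonal operators, and absorb the nonlinear contribution via the previous Proposition with $\eps$ small and $\|\mathbf{U}_i\|_2\le R$. The one minor difference is that the paper keeps $\nu\|\nabla\mathbf{w}\|_2^2+\kappa_1\|\nabla\sigma\|_2^2$ directly as the $\delta_2$-term, whereas you discard them and recover the gradient norm from $\|\Delta\mathbf{W}\|_2^2$ via Poincar\'e; your variant has the small advantage of still giving $\delta_2>0$ in the borderline case $\kappa_1=0$ that the paper allows.
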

	
	\begin{proof}
	The pressure in the term ${\bf G}$ gives a zero contribution since ${\bf v}_1,{\bf v}_2\in J^2(\Omega)$ while the constant terms ${\bf G}_1,G_2$ drop out and then clearly
	$$\langle {\bf L}_1(z)(\th_1-\th_2),{\bf v}_1-{\bf v}_2\rangle\geq -C_6(||\th_1-\th_2||_2^2+||{\bf v}_1-{\bf v}_2||_2^2)=-C_7||{\bf U}_1-{\bf U}_2||_2^2$$
	for a constant $C_7>0$ depending only on the thickness $d$. A similar result holds for ${\bf L}_2$. The remaining terms in $\langle L{\bf U},{\bf U}\rangle$ give the positively dissipative terms
	$$\nu||\nabla ({\bf v}_1-{\bf v}_2)||_2^2+\hat{\xi}||\Delta ({\bf v}_1-{\bf v}_2)||_2^2+\kappa_1||\nabla (\th_1-\th_2)||_2^2+\kappa_2||\Delta (\th_1-\th_2)||_2^2$$
	and finally the remainder satisfies \eqref{eq:stimeNU}. Taking $\eps$ small enough in \eqref{eq:stimeNU} and remembering that $||{\bf U}_1||,||{\bf U}_2||\leq R$, the result easily follows. 
 \end{proof}
	
	\medskip
	
	At this point we can associate to our stationary problem a general variational inequality with a given right-hand side $f$. Namely (see \cite{Degiovanni:2020}, Theorem 7.4.1), given $R>0$, for every $f\in H^{-2}(\Omega)$ there exists only one ${\bf U}\in K_R$ such that
	$$
	\langle F({\bf U}),{\bf V}-{\bf U}\rangle+\omega_R\int_\Omega {\bf U}\cdot({\bf V}-{\bf U})\,dx+\int_\Omega {\bf U}\cdot({\bf V}-{\bf U})\,dx\geq\langle f,{\bf V}-{\bf U}\rangle
	$$
	for every ${\bf V}\in \widehat K_R$ with $\div {\bf v}=\div {\bf u}$. Introducing the \emph{normal cone} $N_{K_R}({\bf U})$ to $K_R$ at ${\bf U}$
	$$N_{K_R}({\bf U})=\{f\in H^{-2}(\Omega)\times H^{-2}(\Omega):\langle f,{\bf V}-{\bf U}\rangle\leq0\hbox{ for all }{\bf V}\in K_R\},$$
	this last result is equivalent to the existence of a unique solution of the differential inclusion
	$$F({\bf U})+\omega_R {\bf U}+{\bf U}+N_{K_R}({\bf U})\ni f.$$
	In \cite{Degiovanni:2020} the normal cone turned out to contain the pressure gradient term (since the point ${\bf v}$ of ``least distance'' from $f$ to $J^2(\Omega)$ is perpendicular to $J^2(\Omega)$ and therefore the cone is made up by a term in $G^{-2,2}(\Omega)$). Here the situation is similar but with one more variable $\th$; however, since $\th\in H^2_0(\Omega)$, the second component of the normal cone is zero. More precisely, adapting Proposition 7.4.3 of \cite{Degiovanni:2020} to our case, it is easily seen that whenever $||{\bf U}||<R$, the normal cone $N_{K_R}({\bf U})$ is given by
	$$N_{K_R}({\bf U})=G^{-2,2}(\Omega)\times\{0\}.$$
	
	\subsection{Existence and uniqueness results}
	
	At this point only energy estimates are needed to prove that a solution of the nonstationary problem exists and is unique in $J^2(\Omega)\times H^2_0(\Omega)$. The technique, however, relies on some definitions of maximal monotone operators that we recall briefly.\\
 From Proposition \ref{prop:maxmon} it follows now that $A+\omega_R I$ is a maximal monotone operator in $X$ (see \cite{Degiovanni:2020}, Theorem 7.4.3) and the whole existence theory given therein applies to our case provided ${\bf U}\in K_R$, i.e. $||{\bf U}||_2\leq R$. 
	
	Set
	$$D(A)=\{{\bf U}\in K_R:(\Delta^2 {\bf v},\Delta^2\th)\in [J^2(\Omega)\oplus G^{-2,2}(\Omega)]\times L^2_{0,per}(\Omega)\}$$
	and define a multivalued operator
	$$A({\bf U})=[F({\bf U})+N_{K_R}({\bf U})]\cap X,$$
	remembering that whenever a strong solution exists, $N_{K_R}({\bf U})$ is actually a singleton and hence $A$ is a usual differential operator.
	
	\begin{defn} Let $\Omega=$D$\times[0,d]$ where $D$ is a bounded set in $\R^2$ with regular boundary. We say that a function ${\bf U}$ is a \emph{strong solution} if \eqref{eq:eqastr} holds and ${\bf U}:[0,+\infty[\to X$ is continuous, if its restriction to $]0,+\infty[$ is absolutely continuous on compact sets, if ${\bf U}(t)\in J^{2,2}\times H^2$ for a.e. $t>0$ and finally if
		$${\bf U}'+L{\bf U}+N({\bf U})+{\bf G}\in G^{-2,2}(\Omega)\times\{0\}$$
		for a.e. $t>0$. This implies that there exists $p\in H^{-2}_{loc}(\Omega)$ such that \eqref{eq:semiastr} holds in $H^{-2}(\Omega)$. 
	\end{defn}
	
	Now we fix $T>0$, we multiply \eqref{eq:eqastr} by ${\bf U}$ in $X$, integrate on $[0,T]$ and notice that, due to the fact that $\langle N({\bf U}),{\bf U}\rangle=0$, we have
	$$\frac{\hfill d}{dt}||{\bf U}||_2^2=2\langle {\bf U}',{\bf U}\rangle\leq -2\delta_1||\nabla {\bf U}||_2^2-2\delta_2||\Delta {\bf U}||_2^2+||G||_2||{\bf U}||_2$$
	so that, using H\"older and Poincaré inequalities and integrating between $0$ and $T$ it follows
	\begin{equation}
		||{\bf U}(T)||^2+\delta_1\int_0^T||\Delta {\bf U}||_2^2(s)\,ds\leq ||{\bf U}(0)||_2^2+MC_dT
		\label{eq:energ1}
	\end{equation}
	where $M$ is a positive constant depending on the horizontal domain $D$ and $C_d$ depends only on the thickness $d$ of the slab. From this it follows that, whenever $U$ exists, it will satisfy $||{\bf U}(T)||_2^2\leq||{\bf U}_2(0)||_2^2+MC_dT$.
	
	Uniqueness follows now easily. Let $T>0$ and ${\bf U}_1,{\bf U}_2$ two strong solutions and let $R$ be such that 
	$$R^2\geq\max\{||{\bf U}_1(0)||_2^2+MC_dT,||{\bf U}_2(0)||_2^2+MC_dT\}.$$ 
	If ${\bf U}_1,{\bf U}_2$ are two strong solutions, then from \eqref{eq:energ1}
	$$||{\bf U}_i||_2^2(T)\leq ||{\bf U}_i(0)||_2^2+MC_dT\leq R^2\qquad (i=1,2)$$
	so that from Proposition \ref{prop:maxmon}
	$$
	\begin{aligned}
		\frac{\hfill d}{dt}||{\bf U}_1-{\bf U}_2||_2^2&=2\langle {\bf U}_1'-{\bf U}'_2,{\bf U}_1-{\bf U}_2\rangle =-2\langle F({\bf U}_1)-F({\bf U}_2),{\bf U}_1-{\bf U}_2\rangle\leq\\
		&\leq 2\omega_R||{\bf U}_1-{\bf U}_2||^2
	\end{aligned}
	$$
	which by integration implies ${\bf U}_1(T)={\bf U}_2(T)$ if ${\bf U}_1(0)={\bf U}_2(0)$. 
	
	The proof of existence now follows the one in \cite{Degiovanni:2020} and gives the following result.
	\begin{thm}
		For every ${\bf U}_0\in X$ there exists one and only one strong solution ${\bf U}(t)=({\bf u}(t),\th(t))$ of \eqref{eq:eqastr} such that ${\bf U}(0)={\bf U}_0$. Moreover, ${\bf U}(t)\in J^{2,2}(\Omega)\times H^2_0(\Omega)$, $\Delta^2 {\bf u}(t)\in J^2(\Omega)\oplus G^{-2,2}(\Omega)$ and $\Delta^2\th(t)\in L^2_{0,per}(\Omega)$ for all $t>0$, and for every $t_0>0$ the function ${\bf V}(t)={\bf U}(t+t_0)$ is the strong solution of \eqref{eq:eqastr} with ${\bf V}(0)={\bf U}(t_0)$.
	\end{thm}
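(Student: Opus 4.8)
The plan is to realize the abstract evolution equation \eqref{eq:eqastr} as a Cauchy problem governed by a maximal monotone operator, and to invoke the Komura--Brezis theory for such operators together with the a priori bound \eqref{eq:energ1}. First I would fix $R>0$ and restrict attention to the operator $A+\omega_R I$ on $K_R$, which by Proposition \ref{prop:maxmon} (via Theorem 7.4.3 of \cite{Degiovanni:2020}) is maximal monotone in $X$; the general theory then furnishes, for any initial datum ${\bf U}_0$ in the closure of $D(A)$, a unique absolutely continuous solution on $[0,+\infty[$ of ${\bf U}'+A({\bf U})+\omega_R{\bf U}\ni {\bf G}$, with ${\bf U}(t)\in D(A)$ for a.e.\ $t$ and ${\bf U}'$ bounded on compact subsets of $]0,+\infty[$. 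The membership ${\bf U}'+L{\bf U}+N({\bf U})+{\bf G}\in G^{-2,2}(\Omega)\times\{0\}$ and the recovery of a pressure $p\in H^{-2}_{\rm loc}(\Omega)$ come precisely from the identification of the normal cone $N_{K_R}({\bf U})=G^{-2,2}(\Omega)\times\{0\}$ established in the previous subsection, so the solution produced is a strong solution in the sense of the Definition above.

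The subtlety is that the abstract theorem only gives a solution confined to the ball $K_R$, whereas the statement claims a global solution for \emph{every} ${\bf U}_0\in X$. To remove the constraint I would use the energy estimate \eqref{eq:energ1}: a solution confined to $K_R$ satisfies $\|{\bf U}(t)\|_2^2\le \|{\bf U}_0\|_2^2+MC_d t$ for all $t$ in its interval of existence. Hence, given ${\bf U}_0$ and given $T>0$, choosing $R$ with $R^2> \|{\bf U}_0\|_2^2+MC_dT$ guarantees that the $K_R$-solution never touches the boundary of $K_R$ on $[0,T]$; there the constraint is inactive, the normal cone is the honest singleton $G^{-2,2}(\Omega)\times\{0\}$, and the differential inclusion reduces to the genuine PDE system. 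Since the $K_R$-solutions for increasing $R$ agree on their common intervals of existence (by the uniqueness argument already carried out, which only uses Proposition \ref{prop:maxmon} and \eqref{eq:energ1}), they patch together to a single solution defined on all of $[0,+\infty[$; letting $T\to+\infty$ gives the global strong solution. Uniqueness is exactly the Gr\"onwall computation already displayed, $\frac{d}{dt}\|{\bf U}_1-{\bf U}_2\|_2^2\le 2\omega_R\|{\bf U}_1-{\bf U}_2\|_2^2$, with $R$ chosen large enough to contain both solutions on $[0,T]$.

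The regularity assertions ${\bf U}(t)\in J^{2,2}(\Omega)\times H^2_0(\Omega)$, $\Delta^2{\bf u}(t)\in J^2(\Omega)\oplus G^{-2,2}(\Omega)$ and $\Delta^2\th(t)\in L^2_{0,per}(\Omega)$ follow from $D(A)$ being contained in $X$ and from the definition of $D(A)$, once one notes the standard smoothing property of maximal monotone flows: even for ${\bf U}_0$ merely in $X$ (not in $D(A)$), one has ${\bf U}(t)\in D(A)$ for every $t>0$, because the estimate \eqref{eq:energ1} controls $\int_0^T\|\Delta{\bf U}\|_2^2$ and the Brezis regularization theorem for evolution equations governed by subdifferential-type operators upgrades this to ${\bf U}(t)\in D(A)$ pointwise in $t>0$. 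Finally, the translation property---that ${\bf V}(t)={\bf U}(t+t_0)$ solves the same equation with datum ${\bf U}(t_0)$---is immediate from autonomy of $L$, $N$ and ${\bf G}$ in time together with uniqueness. The main obstacle I anticipate is the bookkeeping in the previous paragraph: making rigorous that the $R$-dependent solutions are genuinely restrictions of one another and that the inactive-constraint regime really does reduce the inclusion to the PDE, i.e.\ verifying that the a priori bound is strict enough to keep $\|{\bf U}(t)\|_2<R$ rather than merely $\le R$ on the closed interval $[0,T]$; everything else is a direct application of \cite{Degiovanni:2020}.
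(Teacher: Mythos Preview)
Your proposal is correct and follows essentially the same route as the paper: the paper's own proof is the single sentence ``The proof of existence now follows the one in \cite{Degiovanni:2020},'' and what you have written is precisely the unpacking of that reference---maximal monotonicity of $A+\omega_R I$ on $K_R$, the abstract evolution theorem, removal of the $K_R$ constraint via the a~priori bound \eqref{eq:energ1}, the normal-cone identification for the pressure, and the already-displayed Gr\"onwall argument for uniqueness. Your identification of the only genuine bookkeeping issue (strict inequality $\|{\bf U}(t)\|_2<R$ so that the constraint is inactive) is apt and is exactly the point handled implicitly by choosing $R^2>\|{\bf U}_0\|_2^2+MC_dT$.
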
 

    \begin{rem}
       If the initial data are more regular (say, $H^4(\Omega)$) and the boundary is more regular too, then it can be proved that ${\bf U}$ also belongs to $H^4(\Omega)$ for all $t\in[0,T]$ and so is a classical solution of system \eqref{eq:sist}.
    \end{rem}
 
	\begin{rem} {\rm The request that $D$ must be bounded is essentially due to the fact that the solution $\bar T$ introduced above is unbounded in $L^2(D)$. If $\bar T$ is replaced by any solution bounded in $L^2(\Omega)$, then existence and uniqueness follow also for a general domain $D$.}
	\end{rem}

	\section{Thermal convection} 
	\label{S:Convection}
	
	We now suppose equations \eqref{E:Mom}, \eqref{E:Cty} and \eqref{E:Teq} hold in the horizontal layer
	$\{(x,y)\in\mathbb{R}^2\}\times\{0<z<d\}$
	for
	$t\geq 0$
	with gravity acting in the negative
	$z$ direction.
	Thus,
	$g_i=-k_i{\bf g}$,
	where
	${\bf k}=(0,0,1)$.
	The temperatures of the upper and lower planes are kept fixed at 
	$T=T_L$
	at
	$z=0$, $T=T_U$
	at
	$z=d$,
	where
	$T_L,T_U$
	are constants with
	$T_L>T_U$.
	In this case 
	the system of equations \eqref{E:Mom}, \eqref{E:Cty} and \eqref{E:Teq} possesses the steady conduction unique solution
	\begin{equation*}
		{\bar v}_i\equiv 0,\qquad{\bar T}={\bar T}(z),\qquad {\bar p}={\bar p}(z),
	\end{equation*}
	where
	${\bar T}$
	solves
	\begin{equation*}
		\kappa_1\frac{d^2{\bar T}}{dz^2}-\kappa_2\frac{d^4{\bar T}}{dz^4}=0.
	\end{equation*}
	The steady pressure 
	${\bar p}(z)$
	is then found from the steady momentum equation, up to a constant. 
 
 The boundary conditions of strong adherence advocated by
	\cite{FriedGurtin:2006} correspond to 
	$v_i=0,\partial v_i/\partial {\bf n}=0$
	on the horizontal boundaries 
	$z=0,d$,
	and we suppose the solution is periodic in
	$x,y$.
	For thermal convection we suppose the solution as a function of
	$x$
	and
	$y$
	satisfies a horizontal planform which tiles the plane. In particular, a hexagonal planform which is observed in real life,
	is discussed in detail in
	\cite[pages~43-52]{Chandrasekhar:1981}.
 
	The temperature on
	$z=0,d$
	is known and we also suppose
	$\partial T/\partial z=0$
	there. This then yields the steady temperature field in \eqref{tbar}, that it's convenient to rewrite as
	\begin{equation}\label{E:TbarSS}
		{\bar T}(z)=T_L+c_2(\sinh \lambda z-c_1\cosh \lambda z-\lambda z+c_1)\,,
	\end{equation}
	where
	$\beta=(T_L-T_U)/d\,>0$,
        $\lambda=\sqrt{\kappa_1/\kappa_2},$
        and
\begin{equation*}
c_1=\frac{\cosh\lambda d-1}{\sinh\lambda d}\,,\qquad 
c_2=\frac{\beta}{\lambda-2c_1/d}\,.
\end{equation*}
	
	To analyse the stability of the steady solution
	we introduce perturbations 
	$(u_i,\theta,\pi)$ 
	to
	$({\bar v}_i,{\bar T},{\bar p})$
	by
	\begin{equation*}
		v_i={\bar v}_i+u_i,\quad 
		T={\bar T}+\theta,\quad 
		p={\bar p}+\pi\,.
	\end{equation*}
	
	The perturbation equations for
	$(u_i,\theta,\pi)$ 
	are derived and we non-dimensionalize with the scales
	\begin{align*}
		&u_i=u_i^*U,\qquad x_i=x_i^*d,\qquad t=t^*{\mathcal T},\qquad{\mathcal T}=\frac{d}{U},\qquad\xi=\frac{{\hat\xi}}{\nu d^2}\,,\\ 
		&\theta=\theta^*T^{\sharp},\qquad \pi=\pi^*P, \qquad P=\frac{\rho_0\nu U}{d}\,,\qquad 
		\kappa=\frac{{\kappa_2}}{d^2\kappa_1}\,,\\
		&T^{\sharp}=U\sqrt{\frac{\beta\nu}{\kappa_1\alpha g}}\,.
	\end{align*}
	The Rayleigh number 
	$Ra$
	is defined as
	\begin{equation*}
		Ra=R^2=\frac{\alpha\beta gd^4}{\kappa_1\nu}\,.
	\end{equation*}
	
	The non-dimensional perturbation equations are (where we omit *s), with $\nu = d U$,
	\begin{equation}\label{E:Pertn}
		\begin{aligned}
			u_{i,t}+u_ju_{i,j}&=-\pi_{,i}+R\theta k_i+\Delta u_i-\xi\Delta^2u_i,\\
			u_{i,i}&=0,\\
			Pr(\theta_{,t}+u_i\theta_{,i})&=f(z)Rw+\Delta\theta-\kappa\Delta^2\theta
		\end{aligned}
	\end{equation}
	where
	$Pr=\nu/\kappa_1$
	is the Prandtl number, and
 \begin{equation*}
		f(z)=c_4(1-\cosh Az+c_3\sinh Az)\,,
	\end{equation*}
where
$A=1/\sqrt{\kappa}$,
and
\begin{equation*}
c_3=\frac{\cosh A-1}{\sinh A}\,,\qquad c_4=\frac{A}{A-2c_3}\,.
\end{equation*}
	Equations \eqref{E:Pertn} hold on the domain
	$\{(x,y)\in \mathbb{R}^2\}\times\{z\in(0,1)\}$
	for 
	$t>0$.
	The boundary conditions are
	\begin{equation*}
		u_i=0,\quad \frac{\partial u_i}{\partial z}=0,\quad \theta=0,\quad \frac{\partial\theta}{\partial z}=0,\qquad z=0,1,
	\end{equation*}
	together with periodicity in
	$x,y$.

	\section{Linear instability theory} 
	\label{S:Instability}

	To find the critical Rayleigh numbers for instability we linearize \eqref{E:Pertn} and look for solutions of the form
	\begin{equation*}
		u_i=u_i({\bf x})e^{\sigma t},\qquad
		\theta=\theta({\bf x})e^{\sigma t},\qquad
		\pi=\pi({\bf x})e^{\sigma t}.
	\end{equation*}
	We then remove the pressure
	$\pi$
	by taking curlcurl of \eqref{E:Pertn}$_1$ and we retain the third component of the result. This leads to solving the eigenvalue
	problem for
	$\sigma$,
	namely
	\begin{equation}\label{E:Lin}
		\begin{aligned}
			&\sigma \Delta w=\Delta^2w-\xi\Delta^3w+R\Delta^*\theta,\\
			&\sigma Pr\theta=f(z)Rw+\Delta\theta-\kappa\Delta^2\theta,
		\end{aligned}
	\end{equation}
	where 
	$\Delta^*=\partial^2/\partial x^2+\partial^2/\partial y^2$.
	To solve this  we write 
	\begin{equation*}
		w=\frac{W(z)h(x,y)}{R},\quad\theta=\Theta(z)h(x,y),
	\end{equation*}
	where
	$h$
	is the planform discussed in
	\cite[pages~43-52]{Chandrasekhar:1981}, which satisfies
	$\Delta^*h=-a^2h$,
	where
	$a$
	is a wavenumber.
	Let
	$D=d/dz$,
	and then we rewrite \eqref{E:Lin} to reduce the analysis to solving the system
	\begin{equation}\label{E:LinD2}
		\begin{aligned}
			&(D^2-a^2)W-\chi=0,\\
			&(D^2-a^2)\chi-\psi=0,\\
			&(D^2-a^2)\psi-\frac{\psi}{\xi}+\frac{Ra}{\xi}a^2\Theta=-\frac{\sigma}{\xi}\chi,\\
			&(D^2-a^2)\Theta-\Phi=0,\\
			&(D^2-a^2)\Phi-\frac{\Phi}{\kappa}-\frac{f(z)}{\kappa}W=-\frac{\sigma}{\kappa}Pr\Theta,
		\end{aligned}
	\end{equation}
	for
	$z\in(0,1)$,
	together with
	the boundary conditions
	\begin{equation*}
		W=DW=D^2W=\Theta=D\Theta=0,\quad{\rm on}\,\,z=0,1.
	\end{equation*}
	
	Note that we have transformed \eqref{E:Lin} to rearrange 
	$R$
	as
	$Ra$
	in \eqref{E:Lin}$_1$.
	The boundary conditions are also conveniently rewritten as
	\begin{equation*}
		W=0,\quad \chi=0,\quad DW=0,\quad \Theta=0,\quad D\Theta=0,\qquad{\rm on}\,\, z=0,1.
	\end{equation*}
	
	To solve this system numerically the solution is written as a sum of Chebyshev polynomials of form
	\begin{align*}
		&W=\sum_{i=0}^NW_iT_i(z),\\
		&\chi=\sum_{i=0}^N\chi_iT_i(z),\\
		&\psi=\sum_{i=0}^N\psi_iT_i(z),\\
		&\Theta=\sum_{i=0}^N\Theta_iT_i(z),\\
		&\Phi=\sum_{i=0}^N\Phi_iT_i(z).
	\end{align*}
	The discrete version of system \eqref{E:LinD2} gives rise to a generalized matrix eigenvalue problem of form
	\begin{equation}\label{E:AB}
		A{\bf x}=Ra\,B{\bf x}
	\end{equation}
	where
	\begin{equation}\label{E:LinBCs}
		{\bf x}=(W_0,\ldots,W_N,\chi_0,\ldots,\chi_N,\psi_0,\ldots,\psi_N,\Theta_0,\ldots,\Theta_N,\Phi_0,\cdots,\Phi_N)
	\end{equation}
	and the boundary conditions are incorporated into the matrix
	$A$
	by writing them into the appropriate rows of
	$A$.
	The generalized matrix eigenvalue problem \eqref{E:AB} is then solved for the eigenvalues 
	$\sigma$
	by the QZ algorithm of \cite{MolerStewart:1971}.\\
To avoid round off problems with subtracting large but nearly equal numbers we rewrite 
$f(z)$
in the numerical code as
\begin{equation*}
f(z)=\displaystyle{\frac{1-e^{-Az}-(1-e^{-A})(\sinh Az/\sinh A)}{1-(2\sinh(A/2)/(A\cosh(A/2)))}}\,.
\end{equation*}
The function
$f(z)$
is expanded as a series in
$T_n(z)$
and the Fourier coefficients are then used to calculate the matrix
$f(z)*w$,
cf. \cite[~page 829]{PayneStraughan:2000}. We actually solve the numerical system with 
$\sigma\in\mathbb{R}$
since the fully nonlinear stability values found by energy stability theory are so close to the linear instability ones it is practically impossible
for oscillatory convection and/or sub-critical instabilities to be important.

While we have not been able to show analytically that the eigenvalues of \eqref{E:Lin} are real under the boundary conditions \eqref{E:LinBCs}
it is of interest to note that one may do so for idealized boundary conditions. One has to be very careful with boundary conditions for \eqref{E:Lin}
as \cite{Ladyzhenskaya:2003}, \cite[Section 4]{Straughan:2023EPJP} demonstrate.

If we adopt illustrative boundary conditions as in 
\cite[Section 6.5]{Straughan:2023EPJP} then we may arrange \eqref{E:Lin} as
\begin{equation}\label{E:Sys1}
\begin{aligned}
&\mathcal{L}w\equiv \sigma\Delta w-\Delta^2w+\xi\Delta^3w=-R a^2\theta,\\
&\mathcal{M}\theta\equiv \sigma Pr\theta -\Delta \theta+\kappa\Delta^2\theta =f(z)Rw,\\
\end{aligned}
\end{equation}
and suppose the boundary conditions are
\begin{equation}\label{E:BCs1}
w=0,\quad w_{zz}=0,\quad w_{zzzz}=0,\quad\theta=0,\quad\theta_{zz}=0,
\end{equation}
on
$z=0,1$.
The linear operators
$\mathcal{L}$
and
$\mathcal{M}$
are as shown. One may eliminate 
$\theta$
and find the full equation for
$w$,
\begin{equation*}
\mathcal{M}\mathcal{L}w=-R^2a^2f(z)w.
\end{equation*}
Now multiply this equation by
$w^*$,
the complex conjugate of
$w$,
and integrate over a period cell
$V$.

This leads to the equation
\begin{align*}
-\sigma^2Pr\Vert\nabla w\Vert^2&-\sigma (Pr+1)\Vert\Delta w\Vert^2-\sigma (Pr\xi+\kappa)\Vert\nabla\Delta w\Vert^2\\
&-\Vert\nabla\Delta w\Vert^2-(\xi+\kappa)\Vert\Delta^2w\Vert^2
-\kappa\xi\Vert\nabla\Delta^2w\Vert^2=-R^2a^2(f(z)w,w^*).
\end{align*}
Put now
$\sigma=\sigma_r+i\sigma_1$
and take the imaginary part of this equation to find
\begin{equation*}
\sigma_12\sigma_rPr\Vert\nabla w\Vert^2=-\sigma_1\bigl[(Pr+1)\Vert\Delta w\Vert^2+(Pr\xi+\kappa)\Vert\nabla\Delta w\Vert^2\bigr].
\end{equation*}
If
$\sigma_1\ne 0$
then it follows 
$\sigma_r<0$
and the principle of exchange of stabilities holds.

	The resulting critical value of
	$Ra(a^2)$
	is then minimized in
	$a^2$ to find the linear instability value for each value of 
	$\xi,\kappa$.

	Numerical results are reported in Section \ref{S:Num}.

	\section{Global nonlinear stability} 
	\label{S:Stab}
	
	Linear instability theory yields a threshold for when the solution becomes unstable but this threshold does not guarantee that the
	solution will be stable if the Rayleigh number is below this value. 
 We now develop a nonlinear energy stability theory to yield
	a global (for all initial data) bound for nonlinear stability.
	
	To do this let
	$V$
	be a period cell for the solution to \eqref{E:Pertn} and let
	$\Vert\cdot\Vert$
	and
	$(\cdot,\cdot)$
	denote the norm and inner product on
	$L^2(V)$.
	Multiply \eqref{E:Pertn}$_1$ by
	$u_i$
	and integrate over
	$V$
	and likewise multiply \eqref{E:Pertn}$_3$ by
	$\theta$
	and integrate over
	$V$.
 
	After integration by parts and use of the boundary conditions one may find
	\begin{equation}\label{E:En1}
		\frac{d}{dt}\frac{1}{2}\Vert{\bf u}\Vert^2=R(\theta,w)-\Vert\nabla{\bf u}\Vert^2-\xi\Vert\Delta{\bf u}\Vert^2,
	\end{equation}
	and
	\begin{equation}\label{E:En2}
		\frac{d}{dt}\frac{Pr}{2}\Vert{\theta}\Vert^2=R(fw,\theta)-\Vert\nabla{\theta}\Vert^2-\kappa\Vert\Delta{\theta}\Vert^2.
	\end{equation}
	Let 
	$\lambda>0$
	be a coupling parameter to be chosen opportunely and form \eqref{E:En1}+$\lambda$\eqref{E:En2}.
	In this manner we obtain
	\begin{equation}\label{E:EnEq}
		\frac{dE}{dt}=RI-D,
	\end{equation}
	where
	\begin{equation*}
		E=\frac{1}{2}\Vert{\bf u}\Vert^2+\frac{\lambda Pr}{2}\Vert\theta\Vert^2\,,
	\end{equation*}
	the production term is
	\begin{equation*}
		I=(\theta,w(1+\lambda f))\,,
	\end{equation*}
	while the dissipation is
	\begin{equation*}
		D=\Vert\nabla{\bf u}\Vert^2+\xi\Vert\Delta{\bf u}\Vert^2
		+\lambda\Vert\nabla{\theta}\Vert^2+\lambda\kappa\Vert\Delta{\theta}\Vert^2.
	\end{equation*}
	
	Let us consider the space
	\[
	H=\{(\gr{u},\th)\in H^2(V)\times H^2(V):\ u_{i,i}=0\}
	\]
	restricted to periodicity conditions on $(x,y)$ and subjected to boundary conditions $u_i,\de{u_i}{{\bf n}},\th,\de{\th}{{\bf n}}=0$ on $z=0,1$ and consider the quantity
	\[
	\frac{I}{D}=\frac{(\th,[1+\lambda f]w)}{\|\nabla\gr{u}\|^2+
		\xi\|\Delta\gr{u}\|^2+\lambda\|\nabla\th\|^2+\lambda\kappa\|\Delta\th\|^2}.
	\]
	By the boundary conditions on $z=0,1$, the standard Poincaré
	inequality implies that $I/D$ is bounded, hence it makes sense to
	consider
	\begin{equation}
 \label{E:REDef}
	\frac{1}{R_E}=\sup_H\frac{I}{D}.
	\end{equation}
	We now prove that the supremum is indeed a maximum, following the method first introduced by Rionero in \cite{rionero1968metodi} and then used also in \cite{Galdi1985}. Since both $I$ and $D$ are quadratic, one has
	\[
	\sup_H\frac{I}{D}=\sup_{D=1}I.
	\]
	Taking a maximizing sequence $(\gr{u}^{(h)},\th^{(h)})$ with $D(\gr{u}^{(h)},\th^{(h)})=1$, that is
	\[
	I(\gr{u}^{(h)},\th^{(h)})\to\sup_{D=1}I,
	\]
	since the sequence is bounded in $H$ one has, up to a subsequence, that
	\[
	(\gr{u}^{(h)},\th^{(h)})\weak(\gr{u},\th)\quad\text{in $H$}
	\]
	and $D(\gr{u},\th)\leq 1$ by lower semicontinuity. Moreover, up to a subsequence,
	\[
	(\gr{u}^{(h)},\th^{(h)})\to(\gr{u},\th)\quad\text{in $L^2$},
	\]
	hence $I(\gr{u}^{(h)},\th^{(h)})\to I(\gr{u},\th)$. Then
	\[
	\frac{I(\gr{u},\th)}{D(\gr{u},\th)}\geq
	\lim_h \frac{I(\gr{u}^{(h)},\th^{(h)})}{D(\gr{u}^{(h)},\th^{(h)})} = 
	\lim_h I(\gr{u}^{(h)},\th^{(h)})=\sup_H\frac{I}{D}
	\]
	and $(\gr{u},\th)$ is a maximum point.
	
	From \eqref{E:EnEq}, similarly to what was done in \cite{Galdi1991MathematicalPF}, one sees that
	\begin{equation}\label{E:EnIneq}
		\frac{dE}{dt}\le -D\Bigl(1-\frac{R}{R_E}\Bigr)\,.
	\end{equation}
	Suppose that
	$R<R_E$
	so that
	$\gamma=1-R/R_E\,>\,0$,
	then from inequality \eqref{E:EnIneq} one may deduce
	\begin{equation}\label{E:EnIn2}
		\frac{dE}{dt}\le -k\gamma E,
	\end{equation}
	where
	\begin{equation*}
		k=\min\Bigl\{2\pi^2(1+\xi\pi^2),\frac{2\pi^2}{Pr}(1+\kappa\pi^2)\Bigr\}\,.
	\end{equation*}
	From \eqref{E:EnIn2}
	\begin{equation*}
		E(t)\le e^{-k\gamma t}E(0)
	\end{equation*}
	and so we obtain global nonlinear stability provided
	$R<R_E$.
	
	To find
	$R_E$
	we calculate the Euler-Lagrange equations from \eqref{E:REDef}, with the change of variables $\varphi=\sqrt{\lambda}\theta$. These are
	\begin{equation}\label{E:ELeqs}
		\begin{aligned}
			&R_EF\varphi k_i+\epsilon_{,i}+\Delta u_i-\xi\Delta^2u_i=0,\\
			&u_{i,i}=0,\\
			&R_EFw+\Delta\varphi-\kappa\Delta^2\varphi=0,
		\end{aligned}
	\end{equation}
	where
	$\epsilon$
	is a Lagrange multiplier and
	$F(z)=(1+\lambda f)/(2\sqrt{\lambda})$.
	To solve equations \eqref{E:ELeqs} we eliminate
	$\epsilon$
	to obtain
	\begin{equation}\label{E:NonEqs}
		\begin{aligned}
			&-R_EF\Delta^*\varphi-\Delta^2w+\xi\Delta^3w=0,\\
			&R_EFw+\Delta\varphi-\kappa\Delta^2\varphi=0.
		\end{aligned}
	\end{equation}
	System \eqref{E:NonEqs} is solved numerically by a Chebyshev tau-QZ algorithm method as in Section \ref{S:Instability}
	subjected to the boundary conditions
	\begin{equation*}
		w=w^{\prime}=w^{\prime\prime}=\varphi=\varphi^{\prime}=0,\qquad{\rm on}\,\, z=0,1.
	\end{equation*}
	We then determine the nonlinear stability thresholds
	\begin{equation*}
		Ra_E=\max_{\lambda>0}\min_{a^2>0}\,R_E^2(a^2,\lambda)\,.
	\end{equation*}
	Numerical results are reported in Section \ref{S:Num}.

	\section{Numerical results} 
	\label{S:Num}
	
	Numerical results are given in Tables \ref{Ta:Ta1} - \ref{Ta:Ta3} and Figures \ref{fig:fig1} - \ref{fig:fig4}.
	
	Figure \ref{fig:fig1} shows the behaviour of the critical Rayleigh number 
	$Ra$
	against
	$\kappa$
	for
	$\xi$
	fixed. Table \ref{Ta:Ta2} gives numerical values over a larger range of
	$\kappa$.
	In all cases 
	$Ra$
	increases with increasing
	$\kappa$
	(and
	$\xi$).
 
	Similar comments apply to the behaviour of
	$Ra$
	against
	$\xi$
	(for fixed
	$\kappa$)
	as shown in Figure \ref{fig:fig3} and Table \ref{Ta:Ta1}, although the actual
	$Ra$
	values are smaller for
	$\kappa$
	increasing. Thus, the stabilizing effect of the
	$\xi$
	term in the momentum equation is greater than the stabilizing effect of the
	$\kappa$
	term in the heat equation.
	
	The wavenumber behaviour as
	$\kappa$
	is increased (for fixed
	$\xi$)
	is shown in Figure \ref{fig:fig2}. This shows that increasing
	$\kappa$ has the effect of making the convection cells more narrow for 
	$\kappa$
	small
	but the wavenumber reaches a maximum and thereafter decreases. After reaching the maximum a further increase in
	$\kappa$
	leads to a relatively rapid widening of the cells. Thus, in this range
	$\kappa$
	increasing has the effect of increasing the critical Rayleigh number thereby making the layer more stable, but in some sense
	making the convection less intense as the cell width increases.
 
The actual maximum values of the wavenumber in Figure \ref{fig:fig2} are given by
\begin{align*}        
&{\rm For}\,\,\xi=10^{-4},\quad Ra=1979\quad a^2_{max}=10.333,\quad{\rm at}\,\,\kappa=4.3\times10^{-3},\\
&{\rm For}\,\,\xi=10^{-3},\quad Ra=2772\quad a^2_{max}=10.809,\quad{\rm at}\,\,\kappa=3.4\times10^{-3},\\
&{\rm For}\,\,\xi=10^{-2},\quad Ra=4119\quad a^2_{max}=11.728,\quad{\rm at}\,\,\kappa=2.45\times10^{-3}.
\end{align*}        
	
	Figure \ref{fig:fig4} and Table \ref{Ta:Ta1} show how the wavenumber increases with
	increasing
	$\xi$.
	Since the wavenumber is inversely proportional to the aspect ratio of the convection cell (width to depth ratio)
	this means that at the onset of thermal convection increasing
	$\xi$
	has the effect of narrowing the convection cells. Thus, the bi-Laplacian term in the momentum equation is in a sense intensifying
	the convection by making it occur in narrower cells.
	
	It is difficult to examine the behaviour of the solution as
	$\xi\to 0$
	or
	$\kappa\to 0$
	since the problem in each case becomes singular. In the case of classical B\'enard convection where
	$\xi=0$
	and
	$\kappa=0$
	there are no boundary conditions on
	$w^{\prime\prime}$
	and
	$\theta^{\prime}$
	and also the basic temperature profile is linear in
	$z$
	as opposed to being exponential.
	
	We have calculated the critical Rayleigh number from the fully nonlinear theory and in Table \ref{Ta:Ta3} we show a comparison of the results for the values of
	linear theory, denoted by
	$Ra$,
	and those for global nonlinear stability, indicated by
	$Ra_E$.
	It is seen that in all cases shown
	$Ra_E$
	is extremely close to
	$Ra$.
	In fact these values are so close that it is probably not possible to distinguish between them on an experimental scale.
	Thus, we may be reasonably confident that the results from linear instability theory are displaying a true picture of what one 
	will see. 
	 \begin{table}[htp]
		\begin{center}
			\begin{tabular}{|l|l|l|l|}
				\hline
				$Ra$        &     $a^2$    & $\xi$              & $\kappa$          \\ \hline
				2130.19     &     10.10    & $10^{-6}$          & $10^{-2}$         \\ \hline
				2254.58     &     10.27    & $10^{-4}$          & $10^{-2}$         \\ \hline
				2456.56     &     10.54    & $5\times 10^{-4}$  & $10^{-2}$          \\ \hline
				2635.05     &     10.72    & $10^{-3}$          & $10^{-2}$         \\ \hline
				3692.40     &     11.30    & $5\times 10^{-3}$  & $10^{-2}$          \\ \hline
				4856.59     &     11.57    & $10^{-2}$          & $10^{-2}$         \\ \hline\hline
				1739.12     &     10.03    & $10^{-6}$          & $10^{-3}$         \\ \hline
				1844.59     &     10.25    & $10^{-4}$          & $10^{-3}$         \\ \hline
				2015.41     &     10.55    & $5\times 10^{-4}$  & $10^{-3}$         \\ \hline
				2165.59     &     10.75    & $10^{-3}$          & $10^{-3}$         \\ \hline
				3048.24     &     11.39    & $5\times 10^{-3}$  & $10^{-3}$         \\ \hline
				4015.37     &     11.69    & $10^{-2}$          & $10^{-3}$         \\ \hline
			\end{tabular}
			\caption{
				Critical Rayleigh and wavenumbers for linear instability.
				Showing variation with $\xi$.
			}
			\label{Ta:Ta1}
		\end{center}
	\end{table}
	\begin{table}[htp]
		\begin{center}
			\begin{tabular}{|l|l|l|l|}
				\hline
				$Ra$        &     $a^2$    & $\kappa$           & $\xi$             \\ \hline
				3998.78     &     11.67    & $7\times 10^{-4}$  & $10^{-2}$         \\ \hline
				4015.37     &     11.69    & $10^{-3}$          & $10^{-2}$         \\ \hline
				4347.50     &     11.69    & $5\times 10^{-3}$  & $10^{-2}$         \\ \hline
				4856.59     &     11.57    & $10^{-2}$          & $10^{-2}$         \\ \hline
				9450.46     &     11.08    & $5\times 10^{-2}$  & $10^{-2}$         \\ \hline
				15359.0     &     10.91    & $0.1$              & $10^{-2}$         \\ \hline\hline
				2155.14     &     10.72    & $7\times 10^{-4}$  & $10^{-3}$         \\ \hline
				2165.59     &     10.75    & $10^{-3}$          & $10^{-3}$         \\ \hline
				2354.39     &     10.80    & $5\times 10^{-3}$  & $10^{-3}$         \\ \hline
				2635.05     &     10.72    & $10^{-2}$          & $10^{-3}$         \\ \hline\hline
				1834.72     &     10.22    & $7\times 10^{-4}$  & $10^{-4}$         \\ \hline
				1844.59     &     10.25    & $10^{-3}$          & $10^{-4}$         \\ \hline
				2011.61     &     10.33    & $5\times 10^{-3}$  & $10^{-4}$         \\ \hline
				2254.58     &     10.27    & $10^{-2}$          & $10^{-4}$         \\ \hline
				4404.14     &      9.97    & $5\times 10^{-2}$  & $10^{-4}$         \\ \hline
				7159.88     &      9.86    & $10^{-1}$          & $10^{-4}$         \\ \hline
			\end{tabular}
			\caption{
				Critical Rayleigh and wavenumbers for linear instability.
				Showing variation with $\kappa$.
			}
			\label{Ta:Ta2}
		\end{center}
	\end{table}
	\begin{table}[htp]
		\begin{center}
			\begin{tabular}{|l|l|l|l|l|l|l|}
				\hline
				$Ra$      &     $a^2$    & $Ra_E$  &  $a_E^2$  &  $\lambda$  &  $\kappa$        &     $\xi$             \\ \hline
				4856.59   &     11.57    & 4856.39 &  11.57    &    0.829    &  $10^{-2}$       &   $10^{-2}$           \\ \hline
				2165.59   &     10.75    & 2165.58 &  10.75    &    0.94     &  $10^{-3}$       &   $10^{-3}$           \\ \hline
				4015.37   &     11.69    & 4015.36 &  11.69    &    0.94     &  $10^{-3}$       &   $10^{-2}$           \\ \hline
				2635.05   &     10.72    & 2634.90 &  10.72    &    0.83     &  $10^{-2}$       &   $10^{-3}$           \\ \hline
				7159.88   &      9.86    & 7157.87 &   9.86    &    0.75     &  $10^{-1}$       &   $10^{-4}$           \\ \hline
			\end{tabular}
			\caption{
				Critical Rayleigh and wavenumbers for linear instability vs. those for nonlinear stability.
				Values of $\xi,\kappa$ shown, along with optimal value of $\lambda$.
			}
			\label{Ta:Ta3}
		\end{center}
	\end{table}
	\begin{figure}[htp]
		\begin{center}
\includegraphics[width=0.9\linewidth]{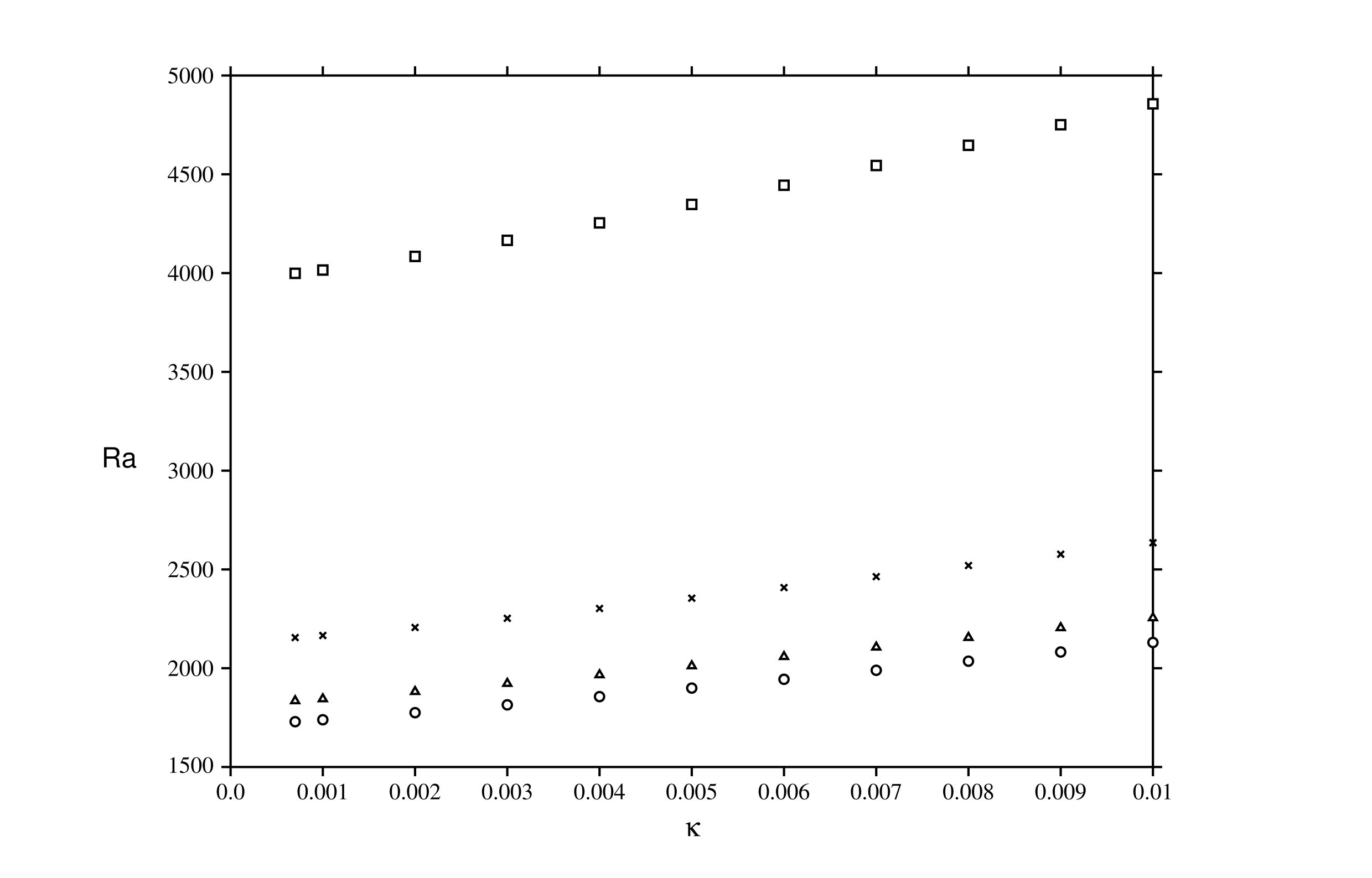}
			\caption{
				Graph of 
				$Ra$ 
				versus 
				$\kappa$.
				$\Box$ denotes $\xi=0.01$; 
				$\times$ denotes $\xi=10^{-3}$; 
				$\triangle$ denotes $\xi=10^{-4}$.
				$\circ$ denotes $\xi=10^{-6}$; 
				$\kappa$ runs from $0.0007$ to 0.01.
			}
			\label{fig:fig1}
		\end{center}
	\end{figure}
	\begin{figure}[htp]
		\begin{center}
\includegraphics[width=0.9\linewidth]{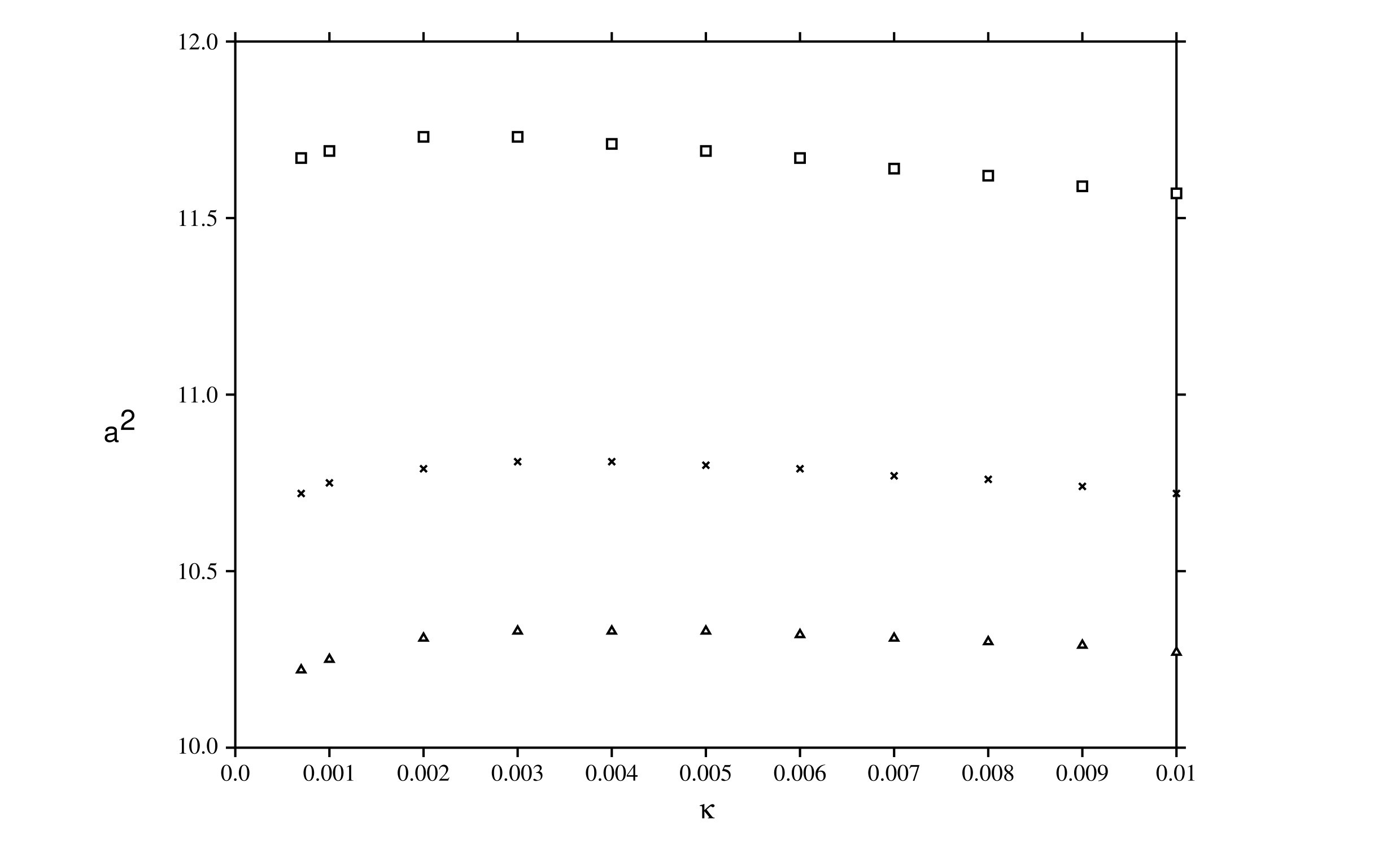}
			\caption{
				Graph of 
				$a^2$ 
				versus 
				$\kappa$. 
				$\Box$ denotes $\xi=0.01$; 
				$\times$ denotes $\xi=10^{-3}$; 
				$\triangle$ denotes $\xi=10^{-4}$.
				$\kappa$ runs from $0.0007$ to 0.01.
			}
			\label{fig:fig2}
		\end{center}
	\end{figure}

	\begin{figure}[htp]
		\begin{center}
			\centering\includegraphics[width=0.9\linewidth]{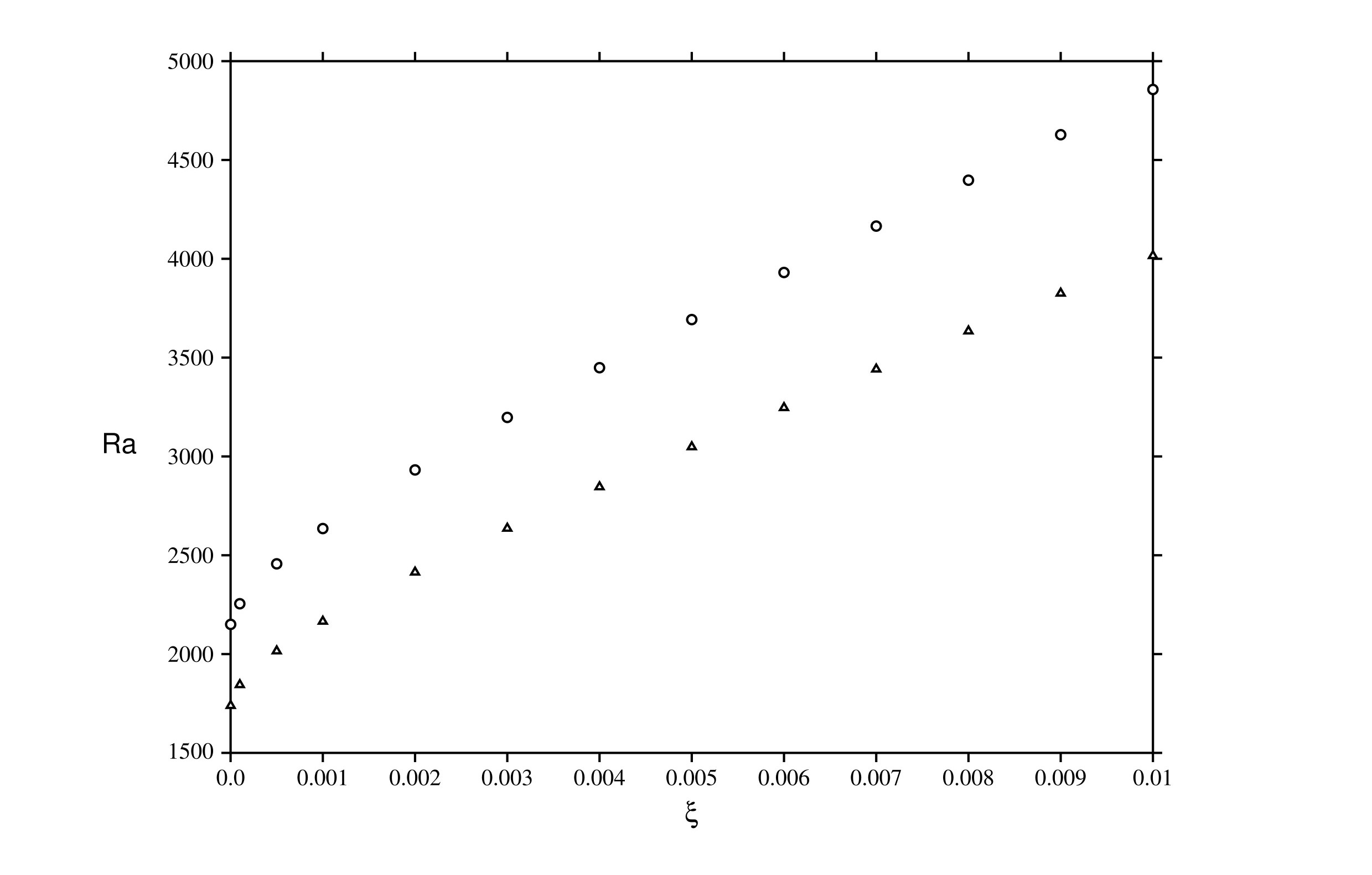}
			\caption{
				Graph of 
				$Ra$ 
				versus 
				$\xi$.
				$\circ$ denotes $\kappa=0.01$; 
				$\triangle$ denotes $\kappa=10^{-3}$; 
				$\xi$ runs from $10^{-6}$ to 0.01.
			}
			\label{fig:fig3}
		\end{center}
	\end{figure}
\begin{figure}[htp]
\begin{center}
\includegraphics[width=0.9\linewidth]{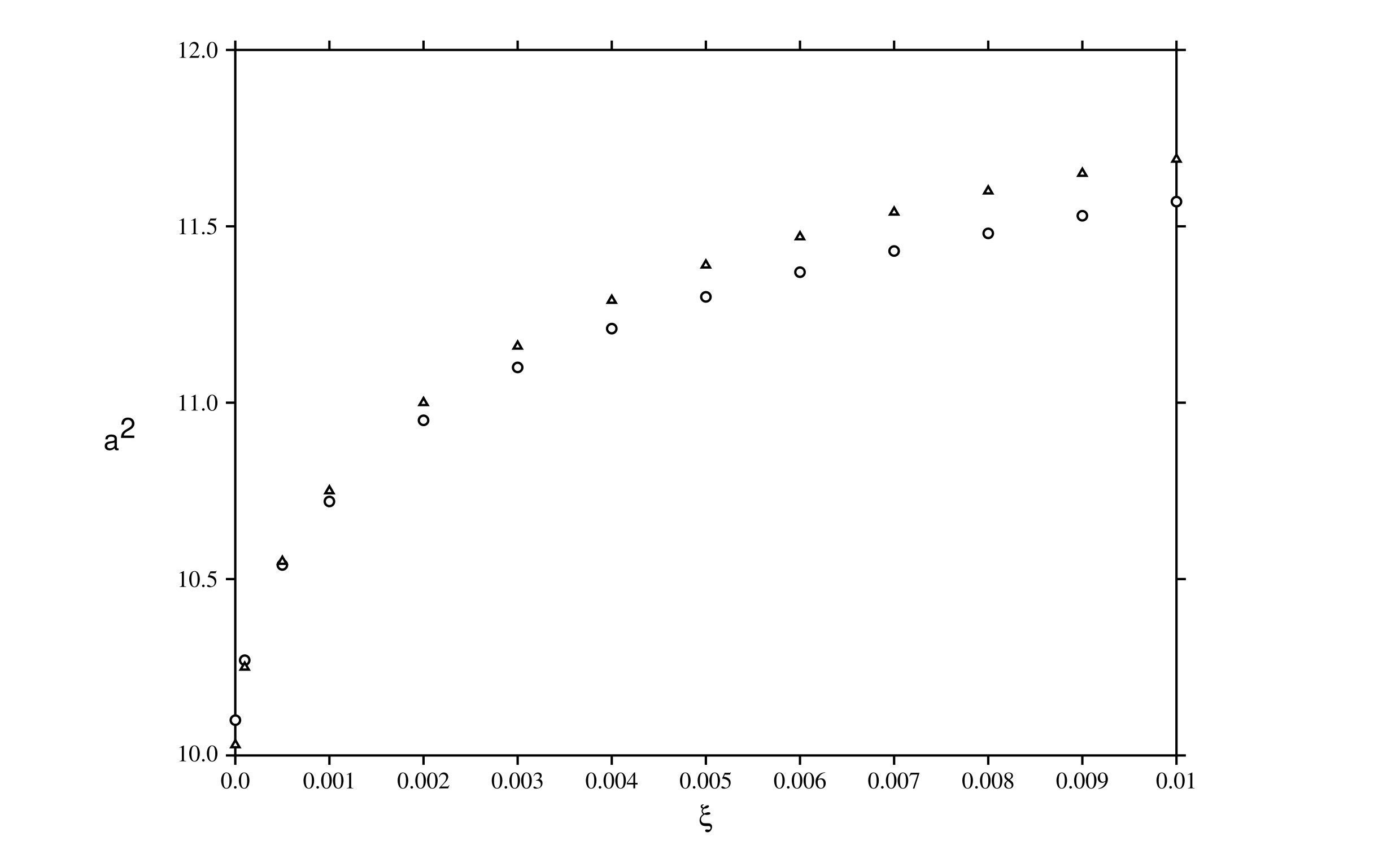}
			\caption{
				Graph of 
				$a^2$ 
				versus 
				$\xi$. 
				$\circ$ denotes $\kappa=0.01$; 
				$\triangle$ denotes $\kappa=10^{-3}$; 
				$\xi$ runs from $10^{-6}$ to 0.1.
			}
			\label{fig:fig4}
   \end{center}
	\end{figure}
	\newpage
	
	\section{Conclusions} 
	\label{S:Conclusions}
	
	We have investigated a model for thermal convection employing a generalized Navier-Stokes theory which includes bi-Laplacian terms
	of both the velocity and temperature fields.
	Such a hydrodynamic model 
	is physically very relevant in current research since, for example, \cite{GreenRivlin:1964,GreenRivlin:1967} and \cite{Green:1965} 
	argue that such extra spatial derivatives will be  
	important when the molecular structure of the fluid involves long molecules. In addition, such a model fits well in the rapidly
	expanding
	industry of microfluidics where length scales are very small, see \cite{FriedGurtin:2006}. We have incorporated higher gradients of temperature into the model and this fits in with similar research in viscoelasticity by
	\cite{Fabrizio:2022jts}. 
	It has been shown that the results of linear instability are practically the same as those found from a 
global nonlinear energy stability analysis. 
	This is very important and demonstrates that the key physics is incorporated by utilizing linear instability theory.
	
	Currently energy production is a vital topic affecting everyone. In this regard \cite{zahra2016new} describe a new 
	method which involves heating and cooling a ceramic
	plate positioned above a container of oil which is undergoing convective thermal motion. The variation of temperature
	in the ceramic plate produces electricity by means of the the pyroelectric effect. 
	It is interesting to ask whether a fluid with long molecules, or a suspension, 
	in a situation where micro-length scales dominate, would improve this technique of generating electricity. This could
	be a genuine use for the theory proposed here.
	
	Another relevant area in renewable energy is solar pond technology. Recent research is adding phase change and other materials to
	salt water to increase efficiency of the solar pond distillation and electricity production, cf. \cite{Mahfoudh:2019},
	\cite{Yu:2022}. Addition of such materials will change the molecular structure of the fluid and is likely to
	be suited to higher order velocity and temperature gradients.
	The work described herein is suitable for a description of a solar pond since it predicts a significantly 
	increased critical Rayleigh number. This means that the threshold before convective instability begins is larger and this is
	highly useful in a solar pond where one does not wish convective motion to ensue. 
	
	Thermal convection in nanofluids is very topical in heat transfer and renewable energy research, see e.g. \cite{ChangRuo:2022}.
	A nanofluid is typically a suspension of tiny particles of a metallic oxide in a carrier fluid and there is definite 
	evidence that a suspension does not behave 
	like a Navier-Stokes fluid, see e.g. \cite{KwakKim:2005}. A copper oxide nanofluid suspension contains 
	particles of the shape of a prolate spheroid of aspect ratio 3, see \cite{KwakKim:2005}. Such a molecular liquid is known to 
	display behaviour not commensurate with Navier-Stokes theory, see e.g. \cite{Travis:1997}, 
	where a flattened velocity profile is observed in Poiseuille flow
	instead of the parabolic one of classical fluid mechanics. 
	The higher order velocity and temperature gradient theory described here does not suffer from the drawback of a parabolic profile. 
	Hence, we believe the theory proposed here is suitable for the basis of
	a proper description of convection in a nanofluid
	suspension.
	
	To conclude we observe that stimulating recent work of \cite{Moon:2019}, \cite{Moon:2021EPJP} 
	has analysed interesting attractors and behaviour for
	ordinary differential equation systems derived from double diffusive convection using Navier-Stokes theory. 
	It is an interesting question to analyse how the inclusion of higher spatial gradients of both velocity and temperature
	would affect the attractor behaviours.\\

	\noindent{\bf Acknowledgments}. The authors would like to thank Marco Degiovanni for providing some helpful suggestions. The work of BS was supported by an Emeritus Fellowship of the Leverhulme Trust, EM-2019-022/9. GG, AG, CL, AM and AM are supported by Gruppo Nazionale per la Fisica Matematica (GNFM) of Istituto Nazionale di Alta Matematica (INdAM).
	\bibliographystyle{acm}

\end{document}